\documentclass[conference]{IEEEtran} 

\IEEEoverridecommandlockouts 

\usepackage{ifthen}
\usepackage{tikz,filecontents}
\usetikzlibrary{shapes,arrows,shadings,patterns}
\newlength\figureheight
\newlength\figurewidth
\usepackage{algorithmic}
\usepackage[noend,lined,linesnumbered]{algorithm2e} 
\makeatletter
\renewcommand{\@algocf@capt@plain}{above}
\makeatother
\usepackage{amsthm}
\usepackage{amsmath}

\newtheorem{proposition}{Proposition}
\newtheorem{fact}{Fact}

\theoremstyle{definition}
\newtheorem{definition}{Definition}
\theoremstyle{definition}


\usepackage[caption=false]{subfig}

\usepackage{amsfonts}
\usepackage{multirow}

\usepackage{tikz}
\usetikzlibrary{arrows, calc, patterns}

\tikzset{
  treenode/.style = {align=center, inner sep=0pt, text centered,
    font=\sffamily},
  arn_n/.style = {treenode, circle, white, font=\sffamily\bfseries, draw=black,
    fill=black, text width=1.5em},
  arn_r/.style = {treenode, circle, red, draw=red, 
    text width=1.5em, very thick},
  arn_x/.style = {treenode, rectangle, draw=black,
    minimum width=0.5em, minimum height=0.5em}
}

\usepackage{xcolor}
\definecolor{comgray}{HTML}{BEBEBE}
\definecolor{citegreen}{HTML}{458B00}
\usepackage{hyperref}
\hypersetup{
   colorlinks=true,
   citecolor=citegreen
}





\usepackage{url}

\usepackage{booktabs} 

\begin{document}

\title{Private Continual Release of Real-Valued Data Streams$^*$\thanks{*This is a preliminary version of the paper with the same title which to appear in the proceedings of the Network and Distributed System Security Symposium, NDSS 2019.}}

%
%

\author{
  \IEEEauthorblockN{Victor Perrier}
  \IEEEauthorblockA{ISAE-SUPAERO\\
  \& Data61, CSIRO\\
  v.perrier@gmail.com}
  \and
  \IEEEauthorblockN{Hassan Jameel Asghar}
  \IEEEauthorblockA{Macquarie University\\ \& Data61, CSIRO\\
  hassan.asghar@mq.edu.au}
  \and
  \IEEEauthorblockN{Dali Kaafar}
  \IEEEauthorblockA{Macquarie University\\ \& Data61, CSIRO\\
  dali.kaafar@mq.edu.au}
  }

 \makeatletter\def\@IEEEpubidpullup{7\baselineskip}\makeatother
 \IEEEpubid{\parbox{\columnwidth}{Permission to freely reproduce all or part of this paper for noncommercial purposes is granted provided that copies bear this notice and the full citation on the first page. Reproduction for commercial purposes is strictly prohibited without the prior written consent of the Internet Society, the first-named author (for reproduction of an entire paper only), and the author's employer if the paper was prepared within the scope of employment.  \\
 NDSS '19, 24-27 February 2019, San Diego, CA, USA\\
 Copyright 2019 Internet Society
 }
 \hspace{\columnsep}\makebox[\columnwidth]{}}
\maketitle

\pagestyle{plain}

\begin{abstract}
We present a differentially private mechanism to display statistics (e.g., the moving average) of a stream of real valued observations where the bound on each observation is either too conservative or unknown in advance. This is particularly relevant to scenarios of real-time data monitoring and reporting, e.g., energy data through smart meters. Our focus is on real-world data streams whose distribution is \emph{light-tailed}, meaning that the tail approaches zero at least as fast as the exponential distribution. For such data streams, individual observations are expected to be concentrated below an unknown threshold. Estimating this threshold from the data can potentially violate privacy as it would reveal particular events tied to individuals~\cite{meter-memoir}. On the other hand an overly conservative threshold may impact accuracy by adding more noise than necessary. We construct a utility optimizing differentially private mechanism to release this threshold based on the input stream. Our main advantage over the state-of-the-art algorithms is that the resulting noise added to each observation of the stream is scaled to the threshold instead of a possibly much larger bound; resulting in considerable gain in utility when the difference is significant. Using two real-world datasets, we demonstrate that our mechanism, on average, improves the utility by a factor of 3.5 on the first dataset, and 9 on the other. While our main focus is on continual release of statistics, our mechanism for releasing the threshold can be used in various other applications where a (privacy-preserving) measure of the scale of the input distribution is required.
\end{abstract}

\maketitle

\section{Introduction}
\label{sec:intro}

Many services can benefit from real-time monitoring of statistics from customer data. Examples include electricity usage in a neighbourhood collected through smart meters, customers' expenditure in a supermarket on a given day, and commute time of residents of a city during peak hours. Statistics for these applications can be obtained from real-time data collected through a variety of sensors and refreshed as new data arrives. These statistics can then be displayed to analysts and planners who could use them to optimize services. Privacy concerns, however, preclude release of raw statistics. For instance, a customer at a pharmacy would not be willing to disclose the purchase of medicines linked to a peculiar health condition. Likewise, analysis of smart meter data can likely reveal the activities of a particular household or even whether anyone is at home or not. Such privacy violations have been demonstrated for the case of smart-meter data where patterns such as the number of people in the household as well as sleeping and eating routines were revealed even without any prior training~\cite{meter-memoir}. The goal therefore is to enable monitoring of statistics without compromising individual privacy. 

A natural candidate for privacy protection is the rigorous framework of differential privacy~\cite{calib-noise, dp-book}. Informally, any algorithm satisfying the definition of differential privacy has the property that its output 
distribution (based on the coin tosses of the algorithm) on a given database is close in probability to the output 
distribution if any single row in the dataset is replaced. The closeness is parameterized by the privacy budget 
$\epsilon$. Most of the work on differential privacy has focused on static (input) datasets, and there has been 
very little focus on datasets that are continuously being updated as in our setting~\cite{cont-observe, cont-release}. 
Despite this, there is a growing need to shift focus to provide privacy in the dynamic setting which is likely 
to be more pervasive in the near future~\cite{information-rich}. 

More precisely, our scenario is concerned with releasing statistics from a sequence of observations arriving in a streaming fashion
each within some public upper bound $B$. Our statistic of interest is the continually changing average  
as new observations arrive. This can be readily obtained by summing all the 
observations seen thus far (since the number of observations is assumed public). We remark that our focus is on 
approaches that provide \emph{event level} privacy~\cite{cont-observe} only, which means that individuals are 
guaranteed that their peculiar events remain private but not necessarily the general trend.\footnote{The latter is guaranteed through \emph{user-level} privacy, i.e., privacy for all events from a user. See \cite[\S 12]{dp-book} and~\cite{cont-observe} for a further discussion on the merits of event versus user-level privacy.} For many use cases this 
is a suitable guarantee of privacy, e.g., individuals might be happy to disclose their routine trip to work while 
unwilling to share the occasional detour. One way to release the sum via differential privacy is to add independent noise generated through the Laplace 
distribution scaled to ${B}$~\cite{calib-noise}. However, this results in cumulative error (absolute difference from the true sum) of $O(B\sqrt{n})$ after $n$ observations. 
Two aforementioned works on continual release of datasets, i.e.,~\cite{cont-observe} and \cite{cont-release}, focus 
on binary streams, where each observation is either 0 or 1. We can generalize their algorithm to observations within the bound $B$ which results in a considerably reduced error of $O(B (\log_2 n )^{1.5})$. 


While this significantly reduces the error over the basic approach, the error is still proportional to $B$. In many real world situations, the bound $B$ might not be known in advance, or known only as the worse case bound resulting in an overly conservative estimate of the true bound. Likewise, perhaps most observations are tightly concentrated below an unknown threshold $\tau$ well below $B$. For instance, returning to our commute time use case, it is highly unlikely that anyone would be commuting for the full 24 hours on a given day. We are interested in a mechanism that allows us to determine a threshold $\tau$ below which majority of the observations are concentrated. This in turn allows to release statistics with noise scaled to $\tau$ rather than $B$ resulting in error $O(\tau (\log_2 n )^{1.5})$, which is a significant improvement depending on $B$, $\tau$ and $n$.\footnote{For instance, assume $n = 1,000,000$ and the known bound is $B = 10,000$, and we are interested in the average. Assume further that almost all observations are within $\tau = 100$ with an average of 30. Then, through the original mechanism we get the (noisy) average as $30 \pm 1$. Through the mechanism that scales noise according to $\tau$, we get the noisy average as $30 \pm 0.01$, an improvement by a factor of $B/\tau = 100$. This can be significant if the average is required with high precision.}

However, estimating $\tau$ is not straightforward due to a number of reasons. First, estimating $\tau$ beforehand would result in high or even unbounded cumulative error due to outliers. Thus, any algorithm needs to observe at least a small subset of initial observations before determining $\tau$. This \emph{time lag} needs to be optimised for accuracy: estimating $\tau$ too early will result in high accumulated error, and too late will only show marginal improvement over the default case (i.e., when using $B$ as the estimate). Likewise, again for reasons of accuracy, we need to ensure that readings outside the threshold are sporadic. Finally and most importantly, naively estimating $\tau$ can result in privacy violation by leaking information specific to an individual, e.g., if we take the maximum of the observations seen so far as $\tau$, we display the exact value corresponding to a particular event from an individual.  

In this paper, we propose a mechanism that allows us to estimate the threshold $\tau$ using a subset of observations from an incoming stream via differential privacy, simultaneously optimizing utility for releasing the moving average. Although we optimize utility for the case of moving averages, our mechanism for releasing the threshold is generic enough to be used for other statistics and applications. These include displaying the average with a sliding window~\cite{decay-ave} or releasing histogram of the streaming data~\cite{xu-hist} where in all cases the noise will be scaled to the most concentrated part of the distribution of the stream. 

In addition to theoretical accuracy guarantees, we provide empirical evidence of the utility gain of our scheme using two real world datasets: the first dataset contain about 50 million individual trip times on public trains in a major metropolitan city over a period of two weeks, and the second dataset is composed of individual amount spent over 140,000 transaction by about 1,000 customers in a major supermarket. Using the two datasets we first verify that real world data has the property that most readings are concentrated tightly well below a conceivable conservative bound $B$. Using the same datasets we then show that our improved algorithm displays the average statistic (commute time or amount spent) with a utility many orders of magnitude ($\approx 3.5$ and 9 resp., on the two datasets) better than applying (generalized versions of) the state of the art algorithms~\cite{cont-observe, cont-release}. Our utility gain is for data streams that obey a \emph{light-tailed distribution}, namely a distribution whose tail lies below the exponential distribution (beyond the above mentioned threshold; see Section~\ref{sub:stat-def} for a precise definition). We argue and show that real-world datasets are expected to satisfy this property.

\section{Background}
\label{sec:background}
In this section we formally describe our problem, associated definitions and overview of the algorithm from~\cite{cont-observe} and \cite{cont-release} referred to as the binary tree (BT) algorithm which will serve both as a benchmark and a sub-module of our technique. 

\subsection{Problem Statement} 

Let $B$ be a positive real number. We model input streams (or strings), denoted $\sigma$, as the set of finite strings $\Sigma = [0, B]^\mathbb{N}$ of length at most $n$. 
The $i$th element of $\sigma$ shall be denoted by $\sigma(i)$, and shall be called the $i$th observation or reading. A generic element or observation from $\sigma$ shall be denoted by $x$. For $j \ge i$, $\sigma(i {:} j)$ represents the substring (or sub-stream) $\sigma(i) || \cdots || \sigma(j)$, where $||$ is the concatenation operator. We are interested in finding the average of the elements of the stream $\sigma$ at each time step $i \in \mathbb{N}$. This reduces to finding $\sum_{j = 1}^i \sigma(j)$ at each step $i \in [n]$, since we assume the \emph{observation counter} to be public. Our goal is to release a privacy-preserving version of this sum. 



\subsection{Privacy Definitions}

\begin{definition}[Sum Query]

We call the function $c: \Sigma \times \mathbb{N} \rightarrow \mathbb{R}$ defined for $\sigma \in \Sigma$ and $i \in [n]$ as
$
c(\sigma, i) = \sum_{j=1}^i \sigma(i)
$
as the sum query. 

\end{definition}

%
%

\begin{definition}[Adjacent Streams]

Let $\sigma,\sigma' \in \Sigma$, The Hamming distance $d(\sigma,\sigma')$ is the number of elements different in the corresponding positions of the two strings, i.e.,
$
d(\sigma,\sigma') = |\{i : \sigma(i) \neq \sigma'(i),\forall i \in \mathbb{N} \} |.
$
The two streams $\sigma$ and  $\sigma'$ are adjacent if and only if $d(\sigma,\sigma') = 1$. 

\end{definition}

\begin{definition}[$(\epsilon,\delta)$-Differential Privacy]

A summation mechanism $M$ is $(\epsilon,\delta)$-differentially private if and only if for any two adjacent streams $\sigma, \sigma'$ we have $\forall n \in \mathbb{N}$ and $\forall S \subset \mathbb{R}$,
\[
\Pr \left[ M(c, \sigma, n) \in S \right] \leq \Pr \left[ M(c, \sigma', n) \in S \right] \times e^{\epsilon} + \delta,	
\]
where $\epsilon$ is a small constant and $\delta$ is a negligible function in $n$. We shall use $\hat{c}$ to denote the output of $M$ in the following.
\end{definition}

The privacy definition does not assume the stream $\sigma$ to have any specific distribution, barring the fact that each of its element is within $[0, B]$. For utility however we shall assume that the streams are sampled with some underlying probability distribution with support over the set $[0, B]$. 

\begin{definition}[Probability Distribution of Streams]
\label{def:dist-stream}
Let $B \in \mathbb{R}^+$. Denote by $\mathcal{F}_B$ the probability distribution which satisfies $\Pr \left[ X \in [0, B] \right] = 1$, for any random variable $X$ distributed as $\mathcal{F}_B$. A string $\sigma$ is said to have distribution $\mathcal{F}_B$, if for all $i \in \mathbb{N}$, $X_i = \sigma(i)$ is sampled from $\mathcal{F}_B$. We denote this by $\sigma \leftarrow_{\mathcal{F}_B} \Sigma$.
\end{definition}

\begin{definition}[$(\alpha, \beta)$ Utility]
\label{def:utility}
The mechanism $\hat{c}$ is said to be $(\alpha, \beta)$-useful if for all $n \in \mathbb{N}$ and $\sigma \leftarrow_{\mathcal{F}_B} \Sigma$,
\[
\Pr \left[ | \hat{c} (\sigma, n) - c( \sigma, n) | \le \alpha \right] \ge 1- \beta, 
\]
where the probability is over the coin tosses of $\hat{c}$ and the distribution $\mathcal{F}_B$. 
\end{definition}
Note that the above is different from the utility definition in~\cite{cont-release}, where the probability is over the coin tosses of $\hat{c}$ only, and hence the inequality is satisfied for all strings $\sigma$. In our case, we shall be utilizing the probability that certain strings are more likely realized in practice; hence the use of the distribution $\mathcal{F}_B$. We stress again that the privacy definition does not rely on $\mathcal{F}_B$. 

%
%


Consider an arbitrary function $c : \Sigma \rightarrow \mathbb{R}$. The sum query falls under this definition with an auxiliary parameter $n \in \mathbb{N}$. We first define the global sensitivity of $c$.
\begin{definition}[Global Sensitivity]
The global sensitivity of a function $c : \Sigma \rightarrow \mathbb{R}$, denoted $\mathsf{GS}$, is defined as
\[
\mathsf{GS}(c) =  \max_{\sigma, \sigma' \in \Sigma \text{ : } d(\sigma,\sigma') \le  1} |c(\sigma) - c(\sigma')|
\]
\end{definition}

\begin{definition}[Laplace Mechanism]
Let $\text{Lap}(b)$ denote the probability density function of the Laplace distribution with mean $0$ and scale $b$ given as
$
\text{Lap}(b) = \frac{1}{2b} \exp \left( \frac{|x| }{b} \right).
$
Then the mechanism 
$
\hat{c}(\sigma) = c(\sigma) + Y,
$
where $Y$ is drawn from $\text{Lap}(\frac{\mathsf{GS}}{\epsilon})$ is $(\epsilon, 0)$-differentially private~\cite{calib-noise}. 
\end{definition}

%
%
%

A definition of sensitivity that is defined for a particular input string $\sigma$ is called local sensitivity.
\begin{definition}[Local Sensitivity]
The local sensitivity of a function $c : \Sigma \rightarrow \mathbb{R}$ at $\sigma \in \Sigma$, denoted $\mathsf{LS}_\sigma$,  is defined as
\[
\mathsf{LS}_{\sigma} (c) =  \max_{\sigma' \in \Sigma \text{ : } d(\sigma,\sigma') =  1} |c(\sigma) - c(\sigma')|
\]
\end{definition}
The advantage of using local sensitivity is that we only need to consider neighboring strings of $\sigma$ which could result in lower sensitivity of the function $c$, and consequently lower noise added to the true answer $c$. Unfortunately, replacing the global sensitivity with local sensitivity naively in the Laplace mechanism (for instance) may not result in differential privacy~\cite{salil-tut}. This drawback can be removed by using smooth sensitivity~\cite{smooth} instead. 

\begin{definition}[Smooth Upper Bound]
For $b > 0$, an $b$-smooth upper bound on $\mathsf{LS}_{\sigma}$, denoted $\mathsf{SS}^*_{\sigma}$ satisfies:
\begin{align*}
\mathsf{SS}^*_{\sigma} (c) &\geq \mathsf{LS}_{\sigma}(c), \; \forall \sigma \in \Sigma,\\
\mathsf{SS}^*_{\sigma} (c) &\leq e^b \mathsf{SS}^*_{\sigma'}(c), \; \forall \sigma, \sigma' \in \Sigma : d(\sigma, \sigma') = 1
\end{align*}
\end{definition}

\begin{definition}[Smooth sensitivity]
\label{def:smooth}
For $b > 0$, the $b$-smooth sensitivity of $c$, denoted $\mathsf{SS}_{\sigma, b}(c)$, at $\sigma \in \Sigma$ is
\[
\mathsf{SS}_{\sigma, b}(c) = \max_{\sigma' \in \Sigma} \left\{ \mathsf{LS}_{\sigma'}(c) \cdot e^{-b d(\sigma, \sigma')} \right\}
\]
\end{definition}
Note that smooth sensitivity is the smallest function to satisfy the definition of a smooth upper bound~\cite{smooth}. Smooth sensitivity allows us to add noise proportional to $\frac{\mathsf{SS}_{\sigma, b}}{a}$ to the output of the function $c$ to obtain $(\epsilon, \delta)$-differential privacy. The choice of $a$ and $b$ depends on the privacy parameters and the distribution used to generate noise.  

\subsection{Statistical Definitions}
\label{sub:stat-def}
\begin{definition}[$p$-Quantile]
\label{def:p-quant}
Let $F$ be a cumulative distribution function (CDF) of some continuous random variable $X$. The $p$-quantile of $F$, denoted $x_p$, is defined as
\[
x_p = \inf \{ x \in \mathbb{R} : F(x_p) = \Pr( X \le x_p ) \ge 1 - p \}.
\]
\end{definition}

\begin{fact}
\label{fact:quant-func}
Let $X$ be an exponentially distributed random variable. Then its CDF is given by 
\[
H(x; \gamma) = \begin{cases}
			1 - e^{-\gamma x},  & x \ge 0,\\
			0, & x < 0.
\end{cases}
\]
Let $0 \le p < 1$. The quantile function of $H$ is given as
\begin{equation}
\label{eq:quant-func}
H^{-1}(p; \gamma) = -\frac{\ln p}{\gamma}
\end{equation}
\end{fact}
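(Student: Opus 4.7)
The plan is to apply Definition~\ref{def:p-quant} directly to the exponential CDF $H(\,\cdot\,;\gamma)$ and then solve an elementary inequality. The CDF itself is just the standard definition of the exponential distribution, so I would take it as given; the content of the fact is the explicit formula for the quantile.

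First I would substitute $F = H$ into the definition, writing
\[
x_p \;=\; \inf\{\, x \in \mathbb{R} : H(x;\gamma) \ge 1 - p \,\}.
\]
Next I would split according to the sign of $x$. For $x < 0$, we have $H(x;\gamma) = 0$, which is strictly less than $1-p$ since by hypothesis $p < 1$; hence no negative $x$ lies in the set. For $x \ge 0$, I would rearrange $1 - e^{-\gamma x} \ge 1 - p$ to $e^{-\gamma x} \le p$, take logarithms (valid for $p>0$ and monotone in the right direction), and divide by $-\gamma < 0$ to obtain $x \ge -\ln(p)/\gamma$. Thus the set in the infimum is exactly the half-line $[-\ln(p)/\gamma,\infty)$, whose infimum is $-\ln(p)/\gamma$, giving the displayed formula~(\ref{eq:quant-func}).

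The only subtlety, and what I would treat as the main (minor) obstacle, is the boundary case $p = 0$ permitted by the hypothesis $0 \le p < 1$. Here the inequality $H(x;\gamma) \ge 1$ has no solution since $H$ approaches but never attains $1$, so the set is empty and its infimum is $+\infty$ by convention; meanwhile $-\ln(0)/\gamma = +\infty$, so the formula remains consistent in this limiting sense. I would mention this briefly to keep the statement uniform over all $p$ in the stated range, but otherwise the proof is a one-line inversion of a monotone increasing function.
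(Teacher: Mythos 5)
Your derivation is correct: under the paper's tail-probability convention for the $p$-quantile (Definition~\ref{def:p-quant}), inverting $1-e^{-\gamma x}\ge 1-p$ gives exactly $x_p=-\ln(p)/\gamma$, and your handling of $x<0$ and the degenerate case $p=0$ is fine. The paper states this as a standard fact without proof, and your one-line monotone inversion is precisely the intended argument.
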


\begin{definition}[Light-tailed distribution]
\label{def:light-tail}
Let $X$ be a random variable with CDF $F$ and let $Y$ be an exponentially distributed random variable with CDF $H(\cdot; \gamma)$. Let $x_p$ be the $p$-quantile of $F$. Let $\gamma = -\frac{\ln p}{x_p}$, so that the $p$-quantile of $H$, i.e., $y_p$, is equal to $x_p$. We say that $X$ has a \emph{light-tailed distribution beyond $x_p$}, or equivalently $F$ is light-tailed beyond $x_p$, if $\forall x \ge x_p$,  $F(x) \ge H(x; \gamma)$.
\end{definition}
The choice $\gamma = -\frac{\ln p}{x_p}$ is immediate from Eq.~\ref{eq:quant-func}. 
\begin{proposition}
\label{prop:pr}
Let $X$ be exponentially distributed with CDF $H(\cdot; \gamma)$. Let $r \ge 1$. Let $x_p$ be the $p$-quantile of $H$ and let $x_{p^r}$ be the $p^r$-quantile of $H$. Then
\begin{equation}
\label{eq:pr}
x_p \cdot r \ge x_{p^r}
\end{equation}
\end{proposition}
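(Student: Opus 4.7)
The plan is to invoke Fact~\ref{fact:quant-func} directly and observe that the inequality actually collapses to an equality for the exponential distribution, so nothing beyond substitution is needed.

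First, I would write out $x_p$ using the quantile formula from Fact~\ref{fact:quant-func}. Since $X$ has CDF $H(\cdot;\gamma)$, the $p$-quantile is exactly
\[
x_p = H^{-1}(p;\gamma) = -\frac{\ln p}{\gamma}.
\]
By Definition~\ref{def:p-quant} applied to the value $p^r \in (0,1]$ (which is well defined because $r \ge 1$ and $0 \le p < 1$ imply $0 \le p^r < 1$), the corresponding $p^r$-quantile is
\[
x_{p^r} = H^{-1}(p^r;\gamma) = -\frac{\ln p^r}{\gamma} = -\frac{r \ln p}{\gamma}.
\]

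Next I would simply compare the two expressions. Multiplying the formula for $x_p$ by $r$ gives $r \cdot x_p = -\frac{r \ln p}{\gamma} = x_{p^r}$, which establishes the desired inequality (with equality, in fact). This is the whole argument; the claim follows immediately from the closed-form quantile function and the log rule $\ln p^r = r \ln p$.

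There is no real obstacle here: the statement is essentially a property of the logarithm disguised as a statement about quantiles, and all the work is done by Fact~\ref{fact:quant-func}. The only mild subtlety is making sure the hypothesis $r \ge 1$ is used correctly — it is needed only to ensure $p^r \le p < 1$ so that $x_{p^r}$ is well defined as a nonnegative quantile of the exponential distribution, and to interpret the inequality as a nontrivial upper bound (for $r < 1$ the roles would reverse, but equality would still hold).
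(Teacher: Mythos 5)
Your proof is correct, and it rests on the same key ingredient as the paper's --- the closed-form quantile function $H^{-1}(p;\gamma)=-\ln p/\gamma$ from Fact~\ref{fact:quant-func} --- but it takes a cleaner, more direct route. You simply compute $x_{p^r} = -\ln(p^r)/\gamma = -r\ln p/\gamma = r\,x_p$ and observe that the inequality holds with equality, whereas the paper argues by contradiction: it treats $r=1$ separately, assumes $x_p\cdot r < x_{p^r}$ for $r>1$, and derives $r<1$; along the way its displayed chain writes $\ln(pr)$ where the direct substitution would give $\ln(p^r)$, a conflation your computation sidesteps entirely. Your version is also slightly stronger in that it makes explicit that equality holds for the exponential distribution, which is all that is used later (via $x_p \le \hat{x}_{\lambda p}$) to get Eq.~\ref{eq:pr-thru-p}. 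One small quibble with your closing remark: the hypothesis $r\ge 1$ is not needed for well-definedness of $x_{p^r}$, since $p\in(0,1)$ gives $p^r\in(0,1)$ for every $r>0$; it is just the regime in which the proposition is later applied (so that multiplying the threshold by $r$ only enlarges it). This does not affect the validity of your argument.
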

\begin{proof}
When $r = 1$, we trivially have $x_{p^r} = x_p = x_p \cdot 1$. So, consider $r > 1$ and assume to the contrary that  $x_p \cdot r < x_{p^r}$. From Eq.~\ref{eq:quant-func}, this implies that
\begin{align*}
	-\frac{\ln p^r}{\gamma} & < -\frac{\ln (pr)}{\gamma} \\
	\Rightarrow -\ln p^r &< -\ln (pr) \\
	\Rightarrow	-r \ln p &< -\ln (pr) < -\ln p,
\end{align*}
which implies $r < 1$, a contradiction.
\end{proof}

\begin{proposition}
\label{prop:p-theta-m}
Let $X$ be a random variable with CDF $F$. Let $x_p$ be the $p$-quantile of $F$. The expected number of samples required to observe at least a constant number of samples $x \in X$ such that $x \ge x_p$ is $\Omega(\frac{1}{p})$. 
\end{proposition}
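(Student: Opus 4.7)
The plan is to reduce the question to a standard waiting-time argument for a sequence of i.i.d.\ Bernoulli trials. By Definition~\ref{def:p-quant}, $x_p$ is the infimum of the set on which $F$ attains the value $1-p$, so for a continuous random variable $X$ we have $F(x_p)=1-p$ exactly, and therefore $\Pr(X\ge x_p)=p$. Each fresh sample is thus ``above the quantile'' independently with probability $p$.

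Next, I would introduce the indicator $T_i = \mathbf{1}[X_i\ge x_p]$ for the $i$th sample; the $T_i$ are i.i.d.\ Bernoulli$(p)$. Let $N_k$ be the (random) index of the $k$th sample satisfying $T_i=1$, i.e.\ the waiting time until we have collected $k$ samples exceeding $x_p$. Then $N_k$ is a negative binomial random variable: it is the sum of $k$ independent geometric random variables, each with mean $1/p$. Linearity of expectation gives $\mathbb{E}[N_k] = k/p$.

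Since the statement requires only ``a constant number'' of samples above $x_p$, $k$ is a constant independent of $p$, so $\mathbb{E}[N_k] = k/p = \Omega(1/p)$, which is exactly the claim.

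The argument is essentially routine once one translates ``samples above $x_p$'' into Bernoulli$(p)$ trials; the only subtle step is invoking the definition of $p$-quantile to conclude that each trial truly has success probability $p$ (rather than merely at most $p$), and this is why the proposition is phrased in terms of expectation rather than, say, a high-probability bound on $N_k$. I do not anticipate any genuine obstacle beyond this bookkeeping.
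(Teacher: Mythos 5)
Your proposal is correct and follows essentially the same route as the paper: both reduce the event $x \ge x_p$ to a Bernoulli trial with success probability $p$ and conclude the $\Omega(\frac{1}{p})$ bound from the expectation. The only cosmetic difference is that the paper argues via ``in $m$ samples we expect $mp$ successes, so set $mp \ge c$,'' whereas you compute the expected waiting time for $k$ successes exactly as $k/p$ via the negative binomial distribution, which is a slightly more direct (and arguably cleaner) rendering of the same counting argument.
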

\begin{proof}
This follows from the properties of the binomial distribution. The probability that a sample $x \in X$ satisfies $x \ge x_p$ is given by $p$.  In $m$ samples, we expect $mp$ successes. Setting $mp \ge c$ for some constant $c$ gives us $m = \Omega(\frac{1}{p})$.
\end{proof}

%
%
%
%
%
%
%
%

\subsection{The Binary Tree (BT) Algorithm}
To release the private version of the sum $c(\sigma, i)$ at step $i \in [n]$, we shall use the binary tree algorithm from~\cite{cont-observe, cont-release} as a building block. We call this the BT algorithm. We briefly outline the algorithm, and discuss what we would like to improve. Given a string of length $n$, the BT algorithm first constructs a complete binary tree: the leaves are labelled by the intervals $[1, 1], [2, 2], \ldots, [n, n]$ and each parent node is the union of intervals of its two child nodes. To output the noisy count $\hat{c}(\sigma, i)$, the algorithm finds at most $\log_2 n$ nodes in the binary tree, whose union equal $[1, i]$. Thus, instead of adding noise of scale $\frac{B n}{\epsilon}$ (for $\epsilon$-differential privacy through a simple application of the Laplace mechanism), the noise added to each node is only scaled to $\frac{B\log_2 n}{\epsilon}$, resulting in an $\epsilon$-differentially private algorithm. For more concrete details, see~\cite{cont-observe, cont-release}.



\begin{figure}[ht]

\centering
\includegraphics[scale=0.35]{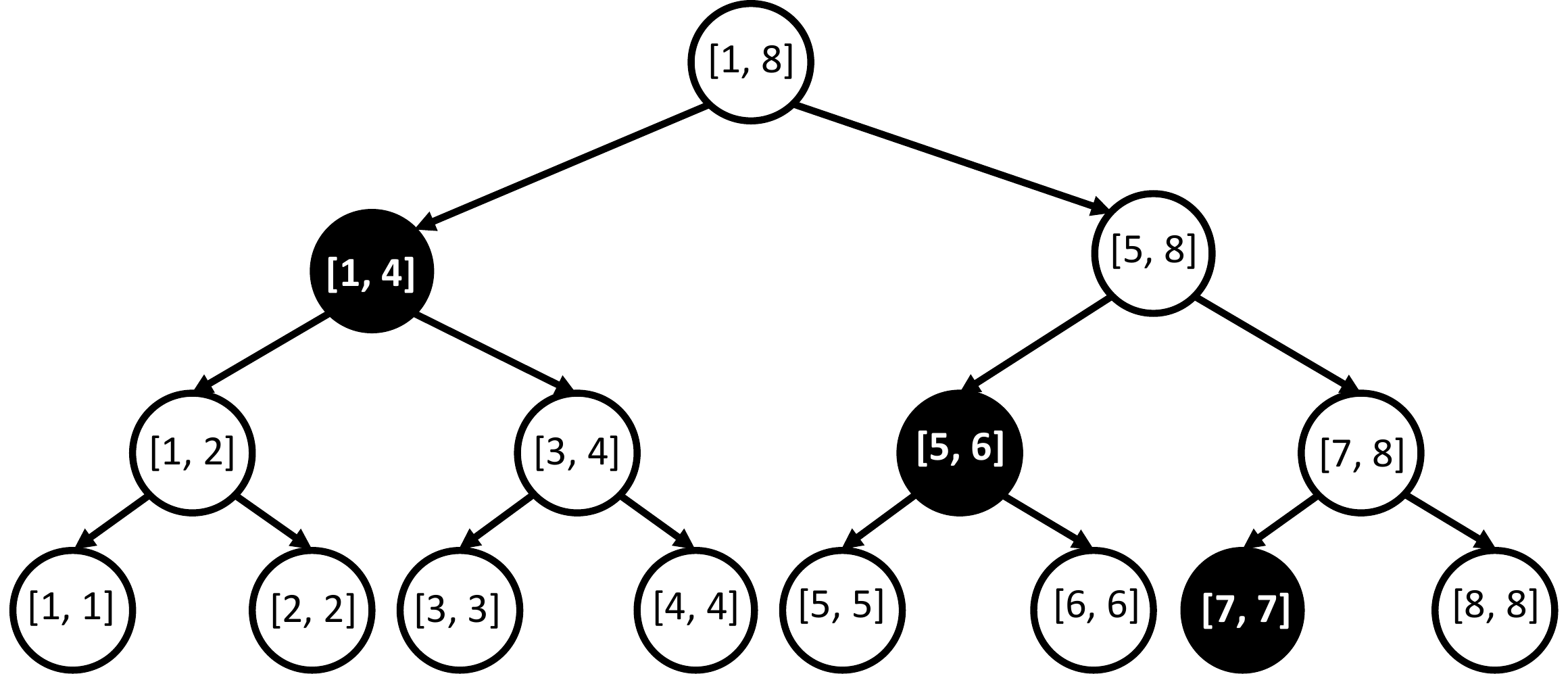}
\caption{Example of the binary tree algorithm~\cite{cont-observe, cont-release}. To compute the sum of the first 7 observations, noise is added to only 3 nodes whose union equals the interval $[1, 7]$.}
\label{fig:binary-tree}
\end{figure}

Figure~\ref{fig:binary-tree} illustrates the algorithm with an example. We have $n = 8$, and we wish to find $\hat{c}(\sigma, 7)$. This can be done by adding only three noisy sums corresponding to the nodes $[1, 4]$, $[5, 6]$ and $[7, 7]$ (shaded in the figure). That is
\begin{align*}
& \hat{c}(\sigma, 7) = \sum_{i=1}^4 \sigma(i) + \text{Lap}\left(\frac{B\log_2 8}{\epsilon}\right) + \sum_{i=5}^6 \sigma(i) \\
&+ \text{Lap}\left(\frac{B\log_2 8}{\epsilon}\right) + \sigma(7) + \text{Lap}\left(\frac{B\log_2 8}{\epsilon}\right)
\end{align*}



\subsection{Goal}
If the stream $\sigma$ has distribution $\mathcal{F}_B$ (see Definition~\ref{def:dist-stream}), then the global sensitivity of the function $c$ is $\mathsf{GS} = B$. 
For strings of length $n$, the BT algorithm has error~\cite{cont-release}
\begin{equation}
\label{eq:bt-alpha}
\alpha = O \left( \mathsf{GS} \cdot \frac{1}{\epsilon} \sqrt{8 \ln \frac{1}{\beta} } (\log_2 n)^{1.5} \right),
\end{equation}
with probability at most $\beta$. Since $\mathsf{GS} = B$, we get a linear term in $B$. We aim to improve the dependence on $B$. 
Our gain is on input streams $\sigma$ whose distribution is light-tailed beyond a threshold $\tau \ll B$ (see Definition~\ref{def:light-tail}). In other words, input streams whose distribution is concentrated below $\tau$. Then, instead of using global sensitivity, we will use smooth sensitivity tailored to the threshold $\tau$. This means that the noise added will be proportional to $\tau$ rather than $B$ in the BT algorithm, resulting in improved utility. In the next section, we give several examples of real-world datasets which are light-tailed. Thus, our improved approach has practical utility gains.



\section{Motivation and Overview of the Proposed Mechanism}
\label{sec:algo}

\subsection{Motivation}
There are many scenarios where an upper bound $B$ on a generic element of the stream is overly conservative:
\begin{itemize}
	\item There might not be a natural bound $B$ known in advance, e.g., a bound on the expenditure during a trip to the supermarket. Any guess on the bound $B$ would be taking into account instances of unusually high spendings. This will result in a very conservative upper bound.
	\item In some cases, a natural bound $B$ may exist. For instance, public transport commute time per day has a natural bound of $B = 24 \text{ hours}$. However, most commute times will be tightly concentrated well below this $B$. Once again, we would have an overly conservative estimate. 
\end{itemize}
Thus, our aim is to obtain a more realistic threshold $\tau \ll B$ tailored to the input stream.

%

\subsection{Empirical Validation of Concentration of Data}
\label{sub:datasets}
Here we validate our assumption that real data is concentrated well below a conceivable bound $B$. 
We consider two real world datasets: 

\begin{itemize}
	\item \emph{Train trips dataset:} This consists of commute times of trips made by passengers through public trains in the greater region surrounding a major metropolitan city.\footnote{We obtained this dataset from the relevant transport authority. The commute time per trip also includes any waiting time after and before a passenger has tapped on and off through the smart ticketing system.}
The aim here is to display the average commute time in real-time (using the cumulative sum). Informally, the privacy issue here is to hide the exact travel time of any single trip of an individual, as it may lead to inferring the individual's exact location at a given time. The total number of train trips in the dataset is about 50 million (spanning over 4 weeks).
%
	\item \emph{Supermarket expenditure dataset:} A dataset showing the amount of money spent by customers to a major supermarket retailer. The goal is to show real-time average expenditure. Informally, privacy property here is to hide the exact transaction amount of a customer on a trip to the supermarket, which may disclose the type of products bought by the customer. 
	This dataset is much smaller and contains about 140,000 transactions (a transaction contains multiple purchased items) by approximately 1,000 customers over a period of one year. 
\end{itemize}
The distribution of both datasets, when viewed as input streams, satisfies our definition of a light-tailed distribution (cf. Definition~\ref{def:light-tail}). 
This is depicted in Figure~\ref{fig:cdf-train}. The left graph shows the (smoothed) empirical cumulative distribution function (ECDF) of the time taken by a trip by a passenger in the train trips dataset.
We can see that the peak is around 20 to 30 mins, and very few customers take more than 150 minutes on a given day. Notice that this is significantly less than the maximum possible time in a 24 hour period (i.e., 1,440 minutes).
Likewise, we see a similar trend in the total expenditure during a trip to the supermarket on a given day, shown in Figure~\ref{fig:cdf-train} (right). 
Many other similar datasets are expected to have a light tailed distribution, e.g., phone-call durations or smart electricity meter readings.

\begin{figure}[ht]
\centering

\includegraphics[scale=0.4, trim={0 0 0 1mm},clip]{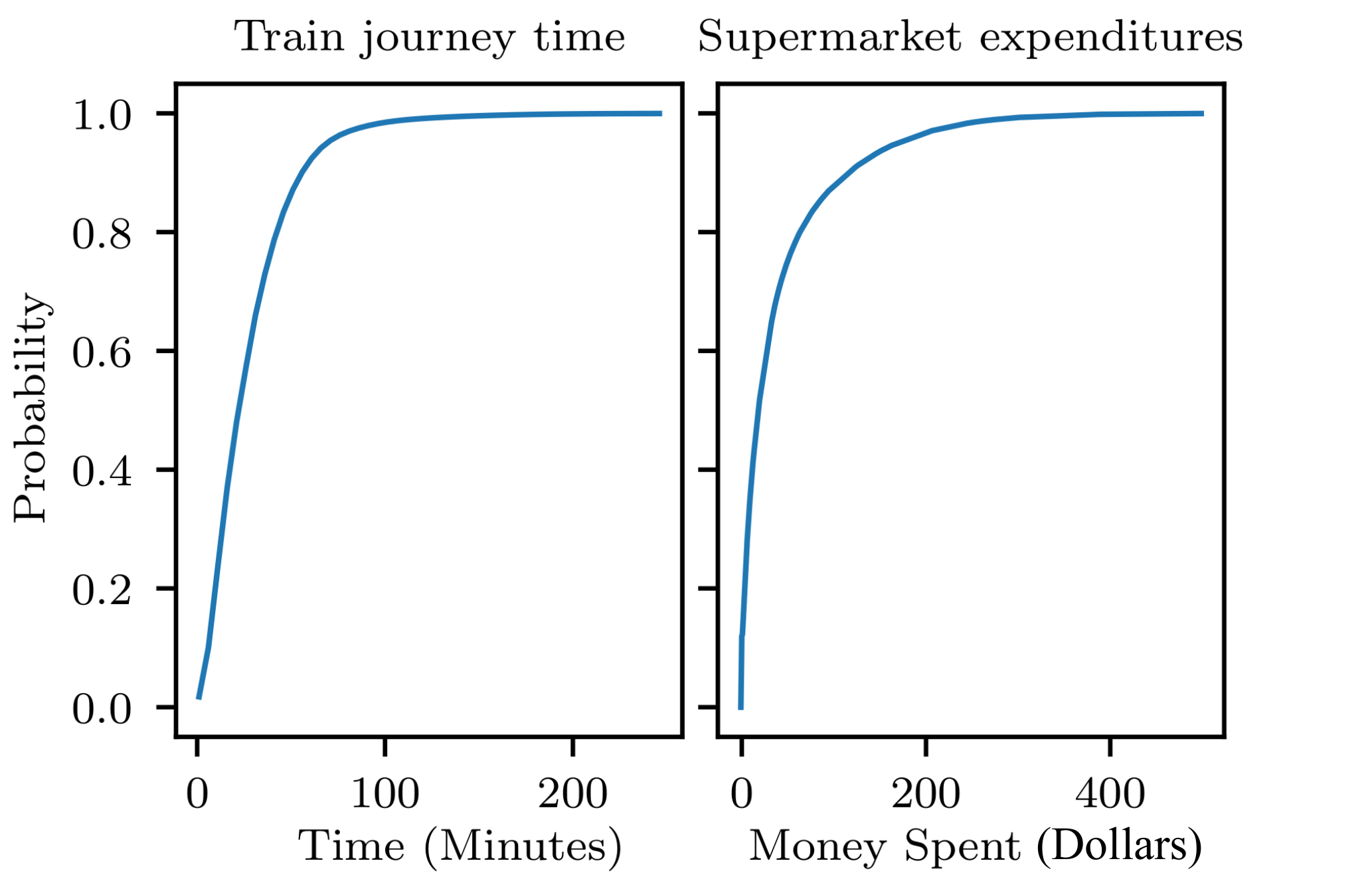}
\caption{ECDF of journey times (in minutes) from the train trips dataset (left) and expenditures (in dollars) from the supermarket dataset (right). Both are concentrated well below any conservative bound.}
\label{fig:cdf-train}
\end{figure}

\subsection{Overview of the Mechanism}
\label{sub:overview}
Figure~\ref{fig:overall-pic} shows the pictorial overview of our approach. We will first (privately) estimate a threshold $\tau$ using the first $m < n$ observations. We then release the sum of the first $m$ observations, i.e., $c(\sigma, m)$, at once using an application of the simple Laplace mechanism with noise scaled to $\approx \tau$. For each step after $m$, we continually release the sum $c(\sigma, i)$ for $m < i \le n$ using the BT algorithm with Laplace noise once again scaled to $\approx \tau$. As a consequence, we withhold releasing the sum before we have enough data points, quantified by $m$, from the stream to estimate $\tau$. We shall call $m$ the \emph{time lag}. Globally, there are two sources of error that we seek to minimise. First is the \emph{outlier error} (denoted $\alpha_{\text{out}})$: any readings above $\tau$ will be stripped to $\tau$ (before adding noise). This error can occur with probability $\beta_{\text{out}}$ indicated in the figure. The second is the accumulated error at the last step, i.e., $n$, due to Laplace noise whose probability is denoted $\beta_{\text{Lap}}$ in the figure. 

%


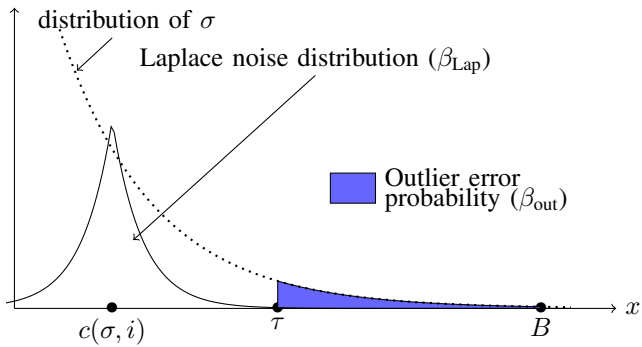
\begin{figure}[h]
\centering

\begin{tikzpicture}
\clip (-0.1,-1) rectangle (8.8,4.1);

\draw[->] (0,0) -- (8,0);
\draw (8,0) node[right] {$x$};
\draw [->] (0,0) -- (0,4);
\draw (0,4) node[above] {$\Pr$};

\draw (1.3,0) node[below] {$c(\sigma, i)$} node{$\bullet$};
\draw (3.5,0) node[below] {$\tau$} node{$\bullet$};
\draw (7,0) node[below] {$B$} node{$\bullet$};

\draw[thin,->] (1.3,3.6) -- (0.83,{6*exp(-0.8*0.8)});
\draw (1.5,3.6) node[above] {distribution of $\sigma$};

\draw[thin,->] (4.0,3.1) -- (1.58,{1.5*exp(-sqrt((1.5-1.3)^2)/0.4)});
\draw (4,3.) node[above] {Laplace noise distribution ($\beta_{\text{Lap}}$)};

\filldraw [fill=blue!60] 
(3.5,0) -- plot [domain = 3.5:7,samples=200] (\x,{6*exp(-\x*0.8)})
-- (7, 0)   -- cycle;

\draw[dotted,thick] plot [domain = 0.6:7.4,samples=2000] (\x,{6*exp(-\x*0.8)});
\draw plot [domain = -0.5:7.4,samples=200] (\x,{2.5*exp(-sqrt((\x-1.3)^2)/0.4)});

\filldraw [black,fill=blue!60] 
(4.2,1.4) -- (4.2,1.8) -- (4.8, 1.8) -- (4.8,1.4) -- cycle;
\draw (4.8,1.75) node[right] {Outlier error};
\draw (4.8,1.45) node[right] {probability ($\beta_{\text{out}}$)};

\end{tikzpicture}

\caption{Conceptual diagram of our approach. Once the threshold $\tau$ has been determined, there are two sources of errors: error due to outliers and error due to the Laplace noise added to the sum $c(\sigma)$.}
\label{fig:overall-pic}
\end{figure}

\subsection*{Privacy} The overall mechanism is outlined in Algorithm~\ref{algo:global-mech}. The mechanism is $(\epsilon, \delta)$-differentially private, where $\epsilon = \epsilon_1 + \epsilon_2$. Steps 1 and 2 are altogether $\epsilon_1 + \epsilon_2, \delta + 0) = (\epsilon, \delta)$-differentially private due to the basic composition property of differential privacy~\cite{dp-book}. Steps 3 to 6 are $(\epsilon, 0)$-differentially private due to the properties of the BT algorithm. Since the two sub-streams $\sigma(1{:}m)$ and $\sigma(m+1{:}n)$ are disjoint, overall we have $(\epsilon, \delta)$-differential privacy due to the parallel composition property of differential privacy~\cite{par-comp}. Notice that while we release the sum of the first $m$ observations in one step, the mechanism can be modified to release the sum of each of the first $m$ observations through the BT algorithm, if required (with noise scaled as $\approx {\tau \log_2 (m)}/{\epsilon_1}$). 

\begin{algorithm}[!ht]
\SetAlgoLined
\SetAlgorithmName{Mechanism}{mechanism}{List of Mechanisms}
\DontPrintSemicolon{}
\let\oldnl\nl
\newcommand{\nonl}{\renewcommand{\nl}{\let\nl\oldnl}}
\SetKwInOut{Input}{input}
\Input{Input stream $\sigma$, stream length $n$, time lag $m \le n$, privacy parameters $\epsilon$ (split between $\epsilon_1$ and $\epsilon_2$) and $\delta$.}
Estimate $\tau < B$ based on the first $m$ values of $\sigma$ through the mechanism described in Section~\ref{sec:threshold} giving us an $(\epsilon_1, \delta)$-differentially private algorithm.\;
Release $\hat{c}(\sigma, m)$ with the Laplace mechanism with noise scaled to $\frac{\tau}{\epsilon_2}$. \;
\For {$i = m + 1$ to $n$}{
	\If{$\sigma(i) > \tau$}{
		Set $\sigma(i) \leftarrow \tau$.\;
	}
	Use the BT algorithm with noise scaled to $\approx {\tau \log_2 (n - m)}/{\epsilon}$ to release $\hat{c}(\sigma, i)$.\;
}
%
%
\caption{Proposed Global Mechanism}
\label{algo:global-mech}
\end{algorithm}
\section{Privately Estimating the Threshold $\tau$}
\label{sec:threshold}
In this section, we will find how to estimate and then privately release the threshold $\tau$. Ideally, $\tau$ should simultaneously minimize the time lag $m$ and the outlier error $\alpha_{\text{out}}$ (characterized by the probability $\beta_{\text{out}}$). We discarded several straightforward ways of privately computing $\tau$. For instance,

\begin{itemize}
	\item The most obvious choice is the maximum of the $m$ values. To make this differentially private, we need to scale noise according to the sensitivity of the max function. If we use global sensitivity, the estimated threshold $\tau$ will be approximately $B$, resulting in no utility gain. We could instead use smooth sensitivity~\cite{smooth}, but since a possible neighbour of the target stream $\sigma$ may have any value between $0$ and $B$, this would again result in sensitivity close to $B$.
	\item Another alternative is to use the standard deviation of the underlying input distribution $\mathcal{F}_B$ of $\sigma$. However, this requires knowing the distribution in advance. We are interested in a more general problem where only a few simple assumptions about the distribution $\mathcal{F}_B$ hold true and are known beforehand.
%
\end{itemize}

Our statistic of choice is the $p$-quantile (cf. Definition~\ref{def:p-quant}). This can be privately computed using an algorithm similar to the algorithm for computing the median of a sequence using smooth sensitivity~\cite{smooth}. Analogous to Definition~\ref{def:p-quant}, the $p$-quantile of a stream $\sigma \rightarrow_{\mathcal{F}_B} \Sigma$ of $n$ elements is defined as
\[
x_p = \min_{i < n} \left\{ \sigma(i) : | \{ j < n : \sigma(j) < \sigma(i) \} | \ge (1 - p)n \right\}. 
\]
That is, the minimum element of $\sigma$ such that at least a $(1 - p)$ fraction of elements in $\sigma$ are below it. Since the CDF of the input distribution $\mathcal{F}_B$ is unknown in advance, we need to obtain an empirical estimate $\hat{x}_p$ of the $p$-quantile. We shall do so using the first $m$ readings of $\sigma$. For differential privacy, the estimate $\hat{x}_p$ needs to be \emph{stable}.\footnote{A real-valued function $f$ of $\sigma$ is said to be $k$-stable if adding or removing any $k$ elements from $\sigma$ does not change the value of $f$. See~\cite[\S 7]{dp-book}.} From Proposition~\ref{prop:p-theta-m}, this means that we require $m = \Omega({\frac{1}{p}})$ readings. On the other hand, for a continual release application, $m$ should be small compared to $n$. More specifically, the time lag $m$ should satisfy
\begin{equation}
\label{eq:constraint}
n \gg m \gg \frac{1}{p}. 
\end{equation}
Additionally, to minimise outlier error, i.e., to minimise $\beta_{\text{out}}$, $p$ needs to be small.


%
%
%

\subsection{Informal Roadmap}
\label{sub:plan}

%
%
%
Since the empirical $p$-quantile, i.e., $\hat{x}_p$, reaches $x_p$ in expectation, it is not possible to upper bound the probability $\Pr[\hat{x}_{p} < x_p]$ arbitrarily to minimise errors due to truncation (step 5 of Mechanism~\ref{algo:global-mech}). We therefore introduce another parameter $\lambda < 1$, and seek to estimate the $\lambda p$-quantile (instead). This allows us to bound $\Pr[\hat{x}_{\lambda p} < x_p]$ by adjusting the parameter $\lambda$, since $x_{\lambda p} > x_p$ if $\lambda < 1$. We denote this probability bound by $\beta_{\text{qt}}$, shown in Figure~\ref{fig:p-quant}. To make the estimate differentially private, and consequently to use it as the threshold $\tau$, we may use additive Laplace noise.  Due to the properties of the Laplace distribution, $\Pr [ \tau < x_p ]$ is non-zero. We seek to bound this within $\beta_{\text{lt}}$. Finally, to fix $\tau$ below a bound $\tau_{\max}$, we set the upper bound $\beta_{\text{rt}}$ on the probability that $t > \tau_{\max}$. The ability to bound these three error probabilities is important for our utility analysis. In the following, we will formally introduce these sources of errors and will subsequently try to minimise them for utility.

\begin{figure}[h]
\centering

\begin{tikzpicture}
\clip (-0.1,-1) rectangle (9.0,4.1);

\draw[->] (0,0) -- (8,0);
\draw (8,0) node[right] {$x$};
\draw [->] (0,0) -- (0,4);
\draw (0,4) node[above] {$\Pr$};

\draw (1.6,0) node[below] {$x_p$} node{$\bullet$};
\draw (2.3,0) node[below] {$x_{\lambda p}$} node{$\bullet$};
\draw (3.4,0) node[below] {$\hat{x}_{\lambda p}$} node{$\bullet$};
\draw (4.1,0) node[below] {$\tau$} node{$\bullet$};
\draw (4.85,0) node[below] {$\tau_{\max}$} node{$\bullet$};

\draw[thin,->] (1.3,3.6) -- (0.83,{6*exp(-0.8)});
\draw (1.5,3.6) node[above] {distribution of $\sigma$};

\draw[thin,->] (3.0,3.15) -- (2.58,{2.5*exp(-((2.55-2.3)^2)/0.4)});
\draw (3.1,3.05) node[above] {distribution of $\hat{x}_{\lambda p}$};

\draw[thin,->] (5.3,3.55) -- (4.65,{2.5*exp(-((4.6-4.1)^2)/0.5)});
\draw (5.5,3.55) node[above] {distribution of $\tau$};

\filldraw [fill=blue!60] 
(0,0) -- plot [domain = 0:1.6,samples=200] (\x,{2.5*exp(-((\x-2.3)^2)/0.4)})
-- (1.6, 0) -- cycle;

\filldraw [fill=green!30, pattern = vertical lines] 
(0,0) -- plot [domain = 0:3.4,samples=200] (\x,{2.5*exp(-((\x-4.1)^2)/0.5)})
-- (3.4, 0) -- cycle;

\filldraw [fill=green!30, pattern = north east lines] 
plot [domain = 4.85:7.5,samples=200] (\x,{2.5*exp(-((\x-4.1)^2)/0.5)})
-- (4.85, 0) -- cycle;

\draw[dotted,thick] plot [domain = 0.6:7.4,samples=200] (\x,{6*exp(-\x)});
\draw plot [domain = 0:7.4,samples=200] (\x,{2.5*exp(-((\x-2.3)^2)/0.4)});
\draw[dashed] plot [domain = 0:7.4,samples=200] (\x,{2.5*exp(-((\x-4.1)^2)/0.5)});

\filldraw [black,fill=blue!60] 
(5.2,1.4) -- (5.2,1.8) -- (5.8, 1.8) -- (5.8,1.4) -- cycle;
\draw (5.8,1.55) node[right] {$\Pr [ \hat{x}_{\lambda p} < x_p ] \leq \beta_{\text{qt}}$};

\filldraw [black,fill=blue!50, pattern = vertical lines] 
(5.2,2.1) -- (5.2,2.5) -- (5.8, 2.5) -- (5.8,2.1) -- cycle;
\draw (5.8,2.25) node[right] {$\Pr [ \tau < x_p ] \leq \beta_{\text{lt}}$};

\filldraw [black,fill=blue!50, pattern = north east lines] 
(5.2,0.7) -- (5.2,1.1) -- (5.8, 1.1) -- (5.8,0.7) -- cycle;
\draw (5.8,0.85) node[right] {$\Pr [ \tau > \tau_{\max} ] \leq \beta_{\text{rt}} $};

\end{tikzpicture}

\caption{Possible sources of error when estimating the threshold $\tau$ using the $p$-quantile.}
\label{fig:p-quant}
\end{figure}
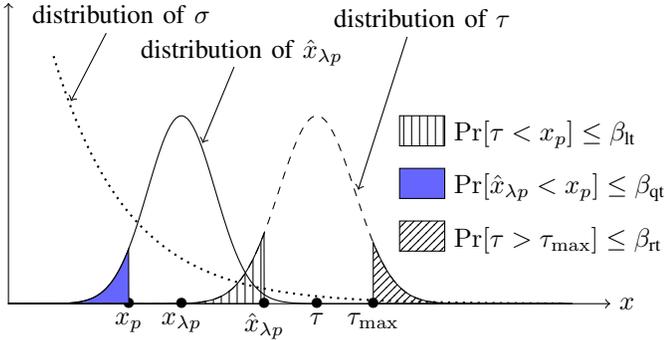

\subsection{Error due to Underestimating the $\lambda p$-Quantile}
\label{sub:comp-quant}


As mentioned above, since the expected value of $\hat{x}_p$ is $x_p$, we cannot bound $\Pr[\hat{x}_p < x_p]$ below and arbitrary bound $\beta_{\text{qt}}$. Thus, instead of estimating the $p$-quantile, we shall estimate the $\lambda p$-quantile with $\frac{1}{pm} < \lambda < 1$. Now the probability of having $\hat{x}_{\lambda p} < x_p$ is given by:
\begin{align}
g(\lambda,p,m) &= \Pr [ \hat{x}_{\lambda p} < x_p ] \nonumber \\
							&= \Pr [ < \lambda p m \text{ values of } \sigma \text{ are } \ge x_{p} ] \nonumber \\
							& = \sum_{i=0}^{\lambda p m} \binom{m}{i} p^i (1-p)^{m-i} \label{eq:func:g}
\end{align}
We denote by $\beta_{\text{qt}}$ the bound on the error probability function $g$.


\subsection{Privately Obtaining the $\lambda p$-Quantile}
For this section we assume that $x_p \le \hat{x}_{\lambda p}$ holds. 
As discussed before, setting the threshold $\tau$ to $\hat{x}_{\lambda p}$ is not private. To obtain a differentially private estimate, we utilize smooth sensitivity.
As shown in~\cite{smooth}, smooth sensitivity can be used to display the median in a differentially private manner. We modify the median algorithm described therein to privately release the $\lambda p$-quantile. First we compute the smooth sensitivity of the empirical $\lambda p$-quantile, i.e., $\hat{x}_{\lambda p}$, as
\begin{align*}
\mathsf{SS}_{\sigma, b}(\hat{x}_{\lambda p}) &=  \max_{k = 0,1,\ldots, m + 1} \{e^{-bk}\mathsf{LS}_{\sigma'}(\hat{x}_{\lambda p}) : d(\sigma,\sigma') \leq k\},
\end{align*}
where 
\[
\mathsf{LS}_{\sigma'} (\hat{x}_{\lambda p}) =  \max_{t = 0,1, \ldots, k+1} | \ddot{\sigma} (P+t ) - \ddot{\sigma} (P+t-k-1) |.
\]
Here, $\ddot{\sigma}$ is the sorted string of the first $m$ values of $\sigma$ in ascending order with $0$ added as a prepend and $B$ as an appendix; and $P$ is the rank of $\hat{x}_{\lambda p}$. This can be done in $O(m^2)$ time~\cite[\S 3.1]{smooth}.

\subsubsection*{Warm:} After computing the smooth sensitivity, we can set the threshold $\tau$ as 
\[ 
\tau = \hat{x}_{\lambda p} +  \frac{\mathsf{SS}_{\sigma, b}(\hat{x}_{\lambda p})}{a} \cdot \text{noise},
\]
where \text{noise} is either the Laplace or standard Gaussian noise. For both, we can set $b \le \frac{\epsilon}{-2\log(\delta)}$ as the smoothing parameter (Definition~\ref{def:smooth}. If we use the Laplace distribution with scale 1, $a = \frac{\epsilon}{2}$ results in $(\epsilon, \delta)$-differential privacy. When the noise is standard Gaussian, then $a = \frac{\epsilon}{\sqrt{-\ln \delta}}$ gives us $(\epsilon, \delta)$-differential privacy~\cite{smooth}. However, as discussed in Section~\ref{sub:plan}, we require $\Pr [ \tau < x_p ]$ to be bounded by an arbitrary $\beta_{\text{lt}}$ (see Figure~ref{fig:p-quant}). Unfortunately, the only way to bound this probability is by adjusting the privacy parameter $\epsilon$, which cannot be arbitrarily chosen (without compromising privacy). Thus, we need to slightly change the above estimate. 

\subsubsection*{Warmer:}  Let $G_\text{ns}$ denote the CDF of the noise distribution. Then instead of the above we can set $\tau$ to 
\begin{align*}
\tau &= \hat{x}_{\lambda p} + \frac{\mathsf{SS}_{\sigma, b} (\hat{x}_{\lambda p})}{a} \cdot ( \text{noise} + \text{offset} ) \\
		&= \hat{x}_{\lambda p} + \frac{\mathsf{SS}_{\sigma, b} (\hat{x}_{\lambda p})}{a} \cdot \text{offset}+ \frac{\mathsf{SS}_{\sigma, b} (\hat{x}_p)}{a} \cdot \text{noise} .
\end{align*}
where $\text{offset} = G_{\text{ns}}^{-1}(1-\beta_{\text{lt}})$.  That is, we offset the noise to the right of $x_p$ to ensure that the probability of the threshold $\tau$ falling below $x_p$ is bounded by $\beta_{\text{lt}}$. Unfortunately, this is no longer differentially private, since the offset itself may leak information. 

\subsubsection*{Solution:}
Informally, to solve this problem we will multiplicatively increase $\mathsf{SS}_{\sigma, b}(\hat{x}_{\lambda p})$ by a factor $\kappa$ such that it ``hides'' the offset as well.\footnote{Thus aiming for a smooth upper bound on smooth sensitivity, rather than smooth sensitivity itself.} First define $\tau'$ to be
\begin{equation}
\label{eq:tau-dash}
\tau' = \hat{x}_{\lambda p} +  \frac{\kappa \mathsf{SS}_{\sigma, b}(\hat{x}_{\lambda p})}{a} \cdot G_{\text{ns}}^{-1}(1-\beta_{\text{lt}}),
\end{equation}
where $\kappa$ is a positive real number to be determined. Then, we finally set
\begin{equation}
\label{eq:final-tau}
\tau = \tau' + \frac{\mathsf{SS}_{\sigma, b} (\tau')}{a} \cdot  \text{noise} 
\end{equation}
We seek the smallest $\kappa$ such that the probability of having $\tau < x_p$ is bounded by $\beta_{\text{lt}}$. Now, to bound this probability a little algebraic manipulation using Eqs.~\ref{eq:tau-dash} and~\ref{eq:final-tau}, together with our assumption $\hat{x}_{\lambda p} \ge x_p$ (beginning of this section), shows that having
\begin{align*}
\Pr \left[ \text{noise} < - \frac{\kappa \mathsf{SS}_{\sigma, b}(\hat{x}_{\lambda p}) G_{\text{ns}}^{-1}(1-\beta_{\text{lt}})}{\mathsf{SS}_{\sigma, b} (\tau')} \right] \leq \beta_{\text{lt}},
\end{align*}
suffices. This is equivalent to
%
%
\begin{align}
G_\text{ns} \left( - \frac{\kappa \mathsf{SS}_{\sigma, b}(\hat{x}_{\lambda p}) G_{\text{ns}}^{-1}(1-\beta_{\text{lt}})}{\mathsf{SS}_{\sigma, b} (\tau')} \right) &\leq \beta_{\text{lt}} \nonumber \\
 \Leftrightarrow  - \frac{\kappa \mathsf{SS}_{\sigma, b}(\hat{x}_{\lambda p}) G_{\text{ns}}^{-1}(1-\beta_{\text{lt}})}{\mathsf{SS}_{\sigma, b} (\tau')}  &\leq G_\text{ns}^{-1} (\beta_{\text{lt}}) \nonumber \\
\Leftrightarrow  \frac{\kappa \mathsf{SS}_{\sigma, b}(\hat{x}_{\lambda p}) G_{\text{ns}}^{-1}(1-\beta_{\text{lt}})}{\mathsf{SS}_{\sigma, b} (\tau')}  & \geq G_\text{ns}^{-1} (1 - \beta_{\text{lt}}) \nonumber \\
\Leftrightarrow  \kappa \mathsf{SS}_{\sigma, b}(\hat{x}_{\lambda p})  & \geq  \mathsf{SS}_{\sigma, b} (\tau') \label{eq:k-times}
\end{align}
Also, from Eq.~\ref{eq:tau-dash}, using the triangle inequality and homogeneity property of smooth sensitivity~\cite{smooth}, we get
\begin{equation}
\label{eq:sum-sense}
\mathsf{SS}_{\sigma, b} (\tau') \leq \mathsf{SS}_{\sigma, b}(\hat{x}_{\lambda p}) + \frac{\kappa G_{\text{ns}}^{-1}(1-\beta_{\text{lt}}) }{a}  \cdot \mathsf{SS}_{\sigma, b} (\mathsf{SS}_{\sigma, b}(\hat{x}_{\lambda p}))
\end{equation}
From the definition of smooth sensitivity we see that $\forall \sigma, \sigma'$ such that $d(\sigma, \sigma') = 1$, we have 
\[
\mathsf{SS}_{\sigma', b}(c) \leq e^b \mathsf{SS}_{\sigma, b}(c). 
\]
This also allows us to write, $\forall \sigma'$ such that $d(\sigma, \sigma') = 1$
\[
\mathsf{LS}_{\sigma'}( \mathsf{SS}_{\sigma, b}(c) ) \le (e^b - 1) \mathsf{SS}_{\sigma', b}(c).
\]
Now similar to the computation of smooth sensitivity of the median in~\cite{smooth}, we have
\begin{align}
& \mathsf{SS}_{\sigma, b} (\mathsf{SS}_{\sigma, b}(\hat{x}_{\lambda p}))  \nonumber \\
&\leq \max_{k \in \mathbb{N}} \{e^{-bk} \mathsf{LS}_{\sigma'}( \mathsf{SS}_{\sigma, b}(\hat{x}_{\lambda p}) ) : d(\sigma, \sigma') \leq k \}  \nonumber\\
&\le \max_{k \in \mathbb{N}} \{e^{-bk}  (e^b - 1)\mathsf{SS}_{\sigma', b} (\hat{x}_{\lambda p})  : d(\sigma, \sigma') \leq k \} \nonumber \\
&\le \max_{k \in \mathbb{N}} \{e^{-bk}  (e^b - 1) e^{bk} \mathsf{SS}_{\sigma, b} (\hat{x}_{\lambda p})  : d(\sigma, \sigma') \leq k \}  \nonumber\\
&\le \max_{k \in \mathbb{N}} \{(e^b - 1) \mathsf{SS}_{\sigma, b} (\hat{x}_{\lambda p})  : d(\sigma, \sigma') \leq k \}  \nonumber \\
&\le (e^b - 1) \mathsf{SS}_{\sigma, b} (\hat{x}_{\lambda p}) \label{eq:double-sense}.
\end{align}
Equating Eqs.~\ref{eq:k-times},~\ref{eq:sum-sense} and~\ref{eq:double-sense} gives us the required $\kappa$ as
\begin{equation}
\label{eq:kappa}
\kappa =  \left(1-\frac{(e^{b}-1)G_{\text{ns}}^{-1}(1-\beta_{\text{lt}})}{a} \right)^{-1}
\end{equation}
Now, putting Eq.~\ref{eq:final-tau} into Eq.~\ref{eq:tau-dash}, and using Eq.~\ref{eq:k-times} we have
\begin{align}
 \tau &= \hat{x}_{\lambda p} +  \frac{\kappa \mathsf{SS}_{\sigma, b}(\hat{x}_{\lambda p})}{a} \cdot G_{\text{ns}}^{-1}(1-\beta_2) + \frac{\mathsf{SS}_{\sigma, b} (\tau')}{a} \cdot  \text{noise} \nonumber \\
 		&\le \hat{x}_{\lambda p} + \frac{\kappa \mathsf{SS}_{\sigma, b}(\hat{x}_{\lambda p})}{a} \cdot G_{\text{ns}}^{-1}(1-\beta_2) + \frac{\kappa \mathsf{SS}_{\sigma, b}(\hat{x}_{\lambda p})}{a} \cdot  \text{noise}  \nonumber \\
 		& = \hat{x}_{\lambda p} + \frac{\kappa \mathsf{SS}_{\sigma, b}(\hat{x}_{\lambda p})}{a} \cdot ( \text{noise} + G_{\text{ns}}^{-1}(1-\beta_{\text{lt}}) ) \label{eq:tau-express},
\end{align}
where $\kappa$ is given by Eq.~\ref{eq:kappa}. The threshold $\tau$ released via the above mechanism is differentially private since $\kappa \mathsf{SS}_{\sigma, b}(\hat{x}_{\lambda p})$ is a smooth upper bound of $\hat{x}_{\lambda p}$ and $\kappa$ only depends on public parameters. 

\subsection{Upper Bound on the Threshold}
\label{max-laplacian}
For our utility analysis in the next section, we require an upper bound on the random variable $\tau$ from Eq.~\ref{eq:tau-express}. We see that with probability at least $1 - \beta_{\text{rt}}$, we have
\begin{equation}\label{eq:tau-max}
\tau \le \tau_{\max} = \hat{x}_{\lambda p} + \frac{\kappa \mathsf{SS}_{\sigma, b}(\hat{x}_{\lambda p})}{a} \cdot (G_{\text{ns}}^{-1}(1-\beta_{\text{lt}}) + G_{\text{ns}}^{-1}(1-\beta_{\text{rt}}))
\end{equation}
%
%

%
%
%
%
%

%
%

\section{Utility Analysis}
\label{sub:error}

For this section, we assume that the $\lambda p$-quantile has been obtained after $m$ steps and satisfies the constraint $x_p \le \hat{x}_{\lambda p}$. Furthermore, a threshold $\tau$ has been obtained via Eq.~\ref{eq:tau-express} satisfying Eq.~\ref{eq:tau-max}. As described in Section~\ref{sub:overview}, our mechanism (Mechanism~\ref{algo:global-mech}) then releases $c$ on every new observation from $m + 1$ to $n$. For the $i$th observation $x = \sigma(i)$ where $i > m$, if $x \le \tau$ then we release $c(\sigma, i)$ through the BT algorithm with noise $\text{Lap}(\frac{\tau \log_2(n-m)}{\epsilon})$. This causes an additive error $\alpha_{\text{Lap}}$ in the computation of $c$ with an associated error probability $\beta_{\text{Lap}}$. On the other hand, if $x > \tau$, we instead assume that the new observation is exactly $\tau$ and then again add noise as before. This induces an additional error term, which we have called outlier error, denoted $\alpha_{\text{out}}$. We denote the probability of the outlier error by $\beta_{\text{out}}$. In the following, we bound these two errors by first assuming the (unrealistic) worst case scenario, i.e., every new observation after the time lag $m$ steps is exactly $B$ with probability $p$. We then use the more realistic assumption that the distribution of the stream is light-tailed, and show that based on real-world datasets we are expected to gain significant utility in practice. 

\subsection{Worst Case Error}
\label{subsub:worst-case}
%

Let $\xi$ denote the PDF of the outlier error and $\Xi$ its CDF. Let $E$ be a random variable denoting the outlier error and let $E_i = \sigma(i) - \min \{ \sigma(i), \tau \}$ denote the outlier error of observation $i$, which is bounded by $B- \tau$. Assuming each element of $\sigma$ is distributed as $X \sim \mathcal{F}_B$ (cf. Definition~\ref{def:dist-stream}), we have $\Xi(x) \geq F(x - x_p)$ for strictly positive $x \in X$. The worst case is when the PDF is given as
\[
\xi(x) = \Delta(x) (1 - p) + \Delta (x - (B - \tau)) p
\]
where $\Delta$ is the Dirac delta function. This means that beyond the $p$-quantile, all the values are equal to $B$. With this assumption we can estimate the $(\alpha_{\text{out}}, \beta_{\text{out}})$-utility (Definition~\ref{def:utility}) as follows:
\begin{align*}
\Pr \left[ \sum_{i=1}^n E_i \geq \alpha_{\text{out}} \right] &= \Pr \left[ h \sum_{i=1}^n E_i \geq h\alpha_{\text{out}}\right]  \\
&=  \Pr \left[ \exp\left( h \sum_{i=1}^n E_i \right) \geq \exp(h\alpha_{\text{out}}) \right] \\
&\leq  \frac{\mathbb{E} [ \exp( h \sum_{i=1}^n E_i ) ]}{\exp(h\alpha_{\text{out}})}  \\
&=  \frac{\prod_{i=1}^n \mathbb{E} [ \exp( h E_i ) ]}{\exp(h\alpha_{\text{out}})}  \\
&=  \frac{\mathbb{E} [ \exp( h E ) ]^n }{\exp(h\alpha_{\text{out}})}  \\
&  = \frac{(1-p+pe^{h(B-\tau)})^n}{e^{h\alpha_{\text{out}}}} = \beta_{\text{out}} 
\end{align*}

%
%
%
%
%
%
Solving for $\alpha_{\text{out}}$, we get
\[
\alpha_{\text{out}} = \frac{n \ln(1-p+pe^{h(B-\tau)}) + \ln\frac{1}{\beta_{\text{out}}} }{h}
\]
The value of $h \approx \frac{1}{(B-\tau)pn}$ minimizes $\alpha_{\text{out}}$. Recall that according to Eq.~\ref{eq:constraint}, we want $m \gg \frac{1}{p}$, which implies that $h \ll 1$. The above then becomes
\[
\alpha_{\text{out}} = pn(B-\tau)\left(\ln\frac{1}{\beta_{\text{out}}}+1\right) + o(1)
\]
Adding this to the utility term $\alpha_{\text{Lap}}$ from the BT algorithm (Eq.~\ref{eq:bt-alpha})~\cite{cont-release} we see that the overall error $\alpha$ is
\begin{align}
\alpha &\le \frac{1}{\epsilon}(\log_2(n-m))^{1.5} \tau \sqrt{8\ln \frac{1}{\beta_{\text{Lap}}}} + (B-\tau)pn \left(\ln \frac{1}{\beta_{\text{out}}} +1\right), \label{eq:alpha-worse}
\end{align}
with probability at most $\beta$, where $\beta$ is a bound on the sum of the five error probabilities.\footnote{i.e.,  ${\beta_\text{qt}}$, ${\beta_\text{lt}}$, ${\beta_\text{Lap}}$, ${\beta_\text{out}}$ and ${\beta_\text{rt}}$.}
If the error is dominated by the first summand, then this leads to an improvement factor of $\frac{B}{\tau}$ in utility over the application of the BT algorithm without our mechanism (see Eq.~\ref{eq:bt-alpha}). However, looking at the second summand, we see that the outlier error is proportional to $pn$. To make this into a constant error term , we need $p \approx \frac{1}{n}$. But recall from Eq.~\ref{eq:constraint} that we require $m \gg \frac{1}{p}$. Thus, it is not possible to bound this error term. Hence, if the input stream has the worst-case distribution, our mechanism does not improve utility. However, arguably, real-world data streams are not distributed in this way.

%


\subsection{Error on Light-Tailed Distributions}
\label{subsub:act-err}
As shown in Section~\ref{sub:datasets}, many real-world data distributions are expected to be light-tailed (cf. Definition~\ref{def:light-tail}), thus behaving significantly differently than the worst case. More precisely, we focus on distributions that are light-tailed beyond their ${p_{\max}}$-quantile, a quantity to be determined shortly. Clearly, this holds true for any $p \le p_{\max}$ as well. Figure~\ref{fig:train-sub-exp} shows that this assumption holds for the train trips dataset for $p = 0.005$-quantile. The figure shows the ECDF of travel times against the CDF of the exponential distribution with parameter $\gamma = \frac{-\ln p}{x_p} = \frac{-\ln 0.005}{x_{0.005}}$ (cf. Fact~\ref{fact:quant-func}). The assumption also holds for the supermarket dataset with the same $p$-quantile. We omit the graph due to repetition.

\begin{figure}[!h]
\centering
\includegraphics[trim={0 0 0 4mm},clip]{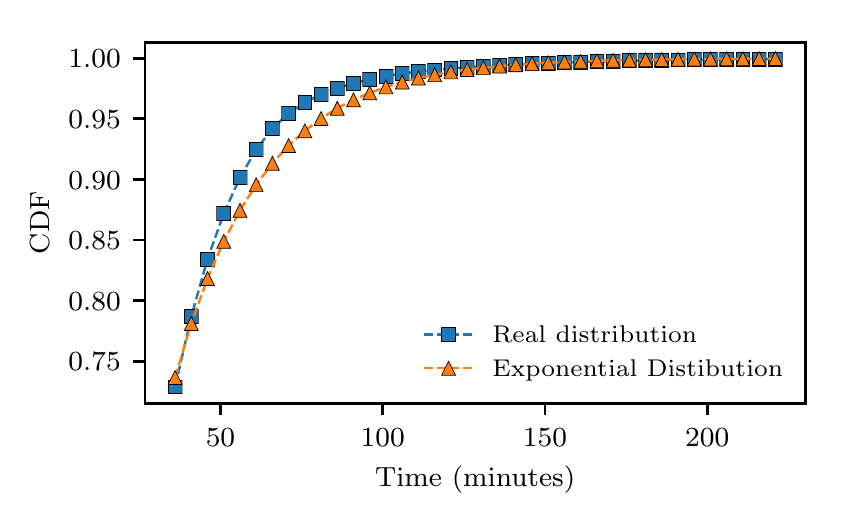}
\caption{The distribution of the train trips dataset compared to an exponential distribution with the same $p = 0.005$-quantile. The real distribution is above the exponential distribution, indicating that it is light-tailed.}
\label{fig:train-sub-exp}
\end{figure}

%
%

Since the distribution is light-tailed, we can use Proposition~\ref{prop:pr} and the assumption $x_p \le \hat{x}_{\lambda p}$ to conclude that for all $p \le p_{\max}$,
\begin{equation}
\label{eq:pr-thru-p}
\hat{x}_{\lambda p} \cdot r \geq x_{p^r}, \forall r \geq 1.
\end{equation}
Thus, instead of using the threshold $\tau$ directly from Eq.~\ref{eq:final-tau}, we multiply it by $r$ and set it as the threshold. According to the above equation, this results in reduced outlier error whenever $r > 1$. 


\subsubsection*{Determining $p_{\max}$:}
To determine the value of $p_{\max}$, we perform a series of experiments on the train trips and the supermarket dataset. We seek a value of $p_{\max}$ that ensures the light-tailed property on both datasets. We vary $p^r$ beginning from a value of $0.1$ to increasingly small values. Against each value of $p^r$ we obtain the empirical $p^r$-quantile, i.e., $\hat{x}_{p^r}$. We then use different values of $p$, e.g., 0.1, 0.01, and so on. From each pair of values of $p$ and $p^r$, we obtain $r$ and multiply it with the empirical $p^r$-quantile to obtain $r \hat{x}_{p}$. The aim is to find a value of $p_{\max}$ such that for all $p \le p_{\max}$, $r \hat{x}_{p} \approx \hat{x}_{p^r}$. Since $\hat{x}_{\lambda p} \ge \hat{x}_{p}$, this implies that  Eq.~\ref{eq:pr-thru-p} would be satisfied. The results are shown in Figure~\ref{fig:quant_trains} for the train trips dataset and Figure~\ref{fig:quant_sm} for the supermarket dataset. The results suggest that $p_{\max} \approx 0.005$ suffices. 

%


\begin{figure*}[!h]
    \centering
  \subfloat[Train trips dataset]{%
 	 \label{fig:quant_trains}
       \includegraphics[width=0.48\textwidth, trim={0 0 0 4mm},clip]{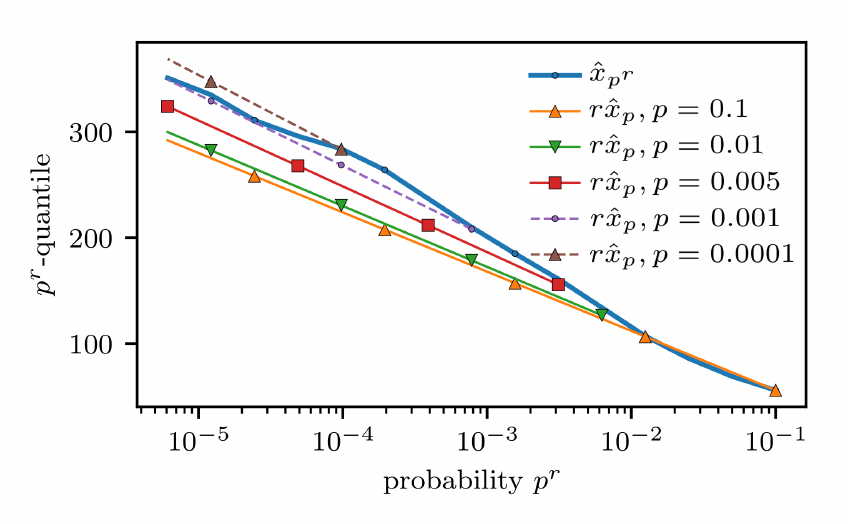}}
  \subfloat[Supermarket dataset]{%
   	\label{fig:quant_sm}
        \includegraphics[width=0.48\textwidth, trim={0 0 0 4mm},clip]{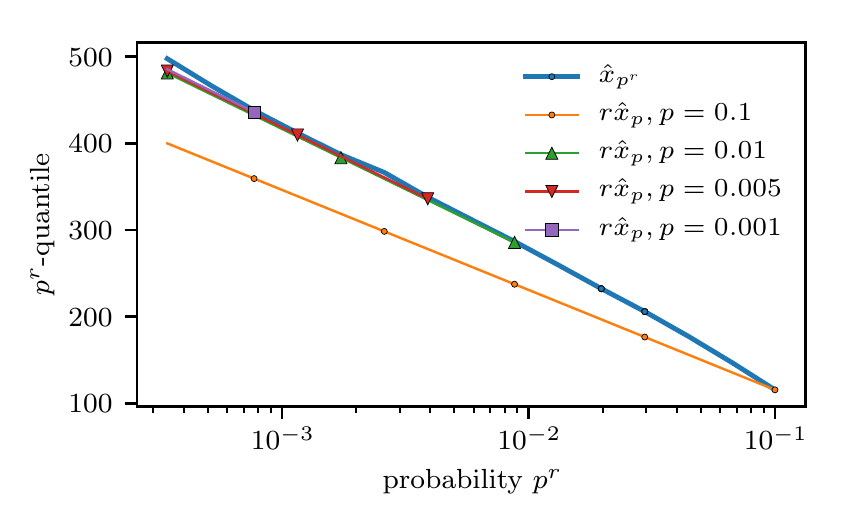}}
  \caption{Error in estimating the empirical $p^r$-quantile through empirical $p$-quantile with different choices of $p$. We see that below $p_{\max} \approx 0.005$, the input datasets satisfy $\hat{x}_{p} \cdot r \approx x_{p^r}$. }
  \label{fig:quants} 
\end{figure*}


\subsubsection*{Error Bound:} Now, assuming that the input stream is light-tailed we see that the outlier error is bounded by the properties of the exponential distribution. That is, PDF $\xi$ of the outlier error can be written as 
\[
\xi (x) = \Delta(x)(1-p^r) + p^r \cdot \gamma e^{-\gamma x},
\]
where $\gamma = \frac{-\ln p}{x_p}$. Thus, 
\begin{equation*}
\Pr \left[ \sum_{i=1}^n E_i \geq \alpha_{\text{out}} \right] \leq  \frac{\mathbb{E} [ \exp( h E ) ]^n }{\exp(h\alpha_{\text{out}})} 
	\leq \frac{(1 + \frac{h}{\gamma - h})^n}{e^{h\alpha_{\text{out}}}} = \beta_{\text{out}}.
\end{equation*}
This gives us,
\[
\alpha_{\text{out}} = \frac{n \ln(1 + \frac{h}{\gamma - h}) + \ln(\frac{1}{\beta_{\text{out}}})}{h}
\]
We can choose the $h$ that minimises $\alpha_{\text{out}}$ as 
\[
h = {\gamma} \left( {\sqrt{\frac{p^r \cdot n}{\ln\frac{1}{\beta_{\text{out}}}}}+1} \right)^{-1},
\]
which leads to 
\[
\alpha_{\text{out}} \leq \frac{-x_p}{\ln p}\left(\sqrt{p^r \cdot n}+\sqrt{\ln\frac{1}{\beta_{\text{out}}}}\right)^2.
\]
Adding this to the error term $\alpha_{\text{Lap}}$ from the BT algorithm (Eq.~\ref{eq:bt-alpha}) and using the assumption $x_p \le \hat{x}_{\lambda p}$, we see that the overall error $\alpha$ is
\begin{align}
\alpha &\le \frac{1}{\epsilon}(\log_2(n-m))^{1.5} \tau r \sqrt{8\ln \frac{1}{\beta_{\text{Lap}}}} \nonumber \\
&+ \frac{-\hat{x}_{\lambda p}}{\ln p}\left(\sqrt{p^r \cdot n}+\sqrt{\ln\frac{1}{\beta_{\text{out}}}}\right)^2  \label{eq:utility} 
\end{align}
with probability at least $1 - \beta$, where $\beta$ is once again a bound on the five error probabilities. Now, to bound the second error term (the second summand) by a constant, we require $p^r \approx {\frac{1}{n}}$. Thus an $r$ logarithmic in $n$ suffices.
With this value of $r$ we see that the overall error is bounded by $O(\tau (\log_2 n)^{1.5}/\epsilon)$. Thus, we obtain an improvement factor of $B/\tau$ over the BT algorithm, which was the aim of our mechanism. In the next section, we will show how to optimize the parameters for utility. 


\section{Optimizing Utility}
\label{sec:opt-util}
\subsection{Optimized Parameters}
To optimize $\alpha$ given by Eq.~\ref{eq:utility}, we ran a series of experiments on the two datasets using the Python library SciPy.\footnote{\url{https://www.scipy.org/}} Specifically, we used a truncated Newton method~\cite{tnc} (TNC) implemented by the \texttt{scipy.optimize.minimize} method to optimize $\alpha$. We fix $n = \text{25,000,000}$ for the train trips dataset and $n = \text{150,000}$ for the supermarket dataset. The parameter $\beta$, i.e., overall probability of exceeding an error of $\alpha$, was fixed to 0.02, and $\delta$ was fixed to {$2^{-20}$}.\footnote{While this value of $\delta$ is higher than recommended (i.e., negligible in $1/n$~\cite[\S 2.3, p. 18]{dp-book}), lower values, say $2^{-30}$~\cite[\S 3, p. 5]{psi}, have a minor impact on utility in our experiments.} For both datasets, we analyze the influence of local and global parameters separately. 




\subsubsection*{Effect of Local Parameters}
We fixed $\epsilon = 1$ for this series of experiments. Then, for different values of the time lag $m$, we ran the optimizer on the objective function $\alpha$ given by Eq.~\ref{eq:utility}, with the constraints: $p \le p_{\max} = 0.005$, $\lambda \leq 1$, $r \geq 1$, and $\kappa > 0$. 
Note that the optimization algorithm is deterministic: given fixed global parameters, we obtain the same value of the local parameters each time. We also define the improvement factor (IF), as the ratio of error obtained from the BT algorithm to the error obtained through our mechanism. The results are shown in Tables~\ref{tab:train} and \ref{tab:sm}. 

\begin{table}[!ht]
\centering
\caption{Optimized parameters for the train trips dataset}
\label{tab:train}
\tabcolsep=0.11cm
\begin{tabular}{l|l|l|l|l|l|l|l|l|l|l}

$m$    & IF & $r$  & $\lambda$ & $p$    & $\frac{\epsilon_1}{\epsilon}$ & $\frac{\beta_\text{qt}}{\beta}$ & $\frac{\beta_\text{lt}}{\beta}$ & $\frac{\beta_\text{Lap}}{\beta}$ & $\frac{\beta_\text{out}}{\beta}$ & $\frac{\beta_\text{rt}}{\beta}$ \\
\hline\hline

40000  & 1                                                            & 1    & 1        & 0.005  & 0                             & 0.00                & 0.00                      & 1                       & 0.00                       & 0.00                      \\

50000  & 2.32                                                            & 1.63    &  0.85       & 0.005  & 0.8                             & 0.09                      & 0.3                      & 0.41                       & 0.09                       & 0.09                      \\

60000  & 2.8                                                          & 1.49 & 0.83     & 0.0049 & 0.9                           & 0.13                      & 0.23                      & 0.37                       & 0.13                       & 0.13                      \\

100000 & 3.69                                                         & 1.76 & 0.87     & 0.0048 & 0.9                           & 0.07                      & 0.1                       & 0.68                       & 0.07                       & 0.07                      \\

300000 & 4.34                                                         & 1.91 & 0.87     & 0.005  & 0.79                          & 0.11                      & 0.11                      & 0.57                       & 0.11                       & 0.11  \\
                   
\end{tabular}

\caption{Optimized parameters for the supermarket dataset}
\label{tab:sm}
\tabcolsep=0.11cm
\begin{tabular}{l|l|l|l|l|l|l|l|l|l|l}

$m$    & IF & $r$  & $\lambda$ & $p$    & $\frac{\epsilon_1}{\epsilon}$ & $\frac{\beta_\text{qt}}{\beta}$ & $\frac{\beta_\text{lt}}{\beta}$ & $\frac{\beta_\text{Lap}}{\beta}$ & $\frac{\beta_\text{out}}{\beta}$ & $\frac{\beta_\text{rt}}{\beta}$ \\
\hline\hline

40000  & 1                                                            & 1    & 1        & 0.005  & 0                             & 0.00                & 0.00                      & 1                       & 0.00                       & 0.00                      \\

50000  & 4.23                                                            & 1    &  0.81       & 0.005  & 0.82                             & 0.23                      & 0.19                      & 0.19                       & 0.19                       & 0.19 \\

60000  & 4.86                                                          & 1.08 & 0.82     & 0.0049 & 0.88                           & 0.15                      & 0.27                      & 0.27                       & 0.15                       & 0.15                      \\

100000 & 6.67                                                         & 1.11 & 0.85     & 0.0046 & 0.81                           & 0.07                      & 0.07                       & 0.71                       & 0.07                       & 0.07                      \\

                   
\end{tabular}
\end{table}
We see that as the time lag $m$ grows, IF is less impacted by Laplace noise added to the sum as indicated by the decreasing ratio $\beta_\text{Lap}/\beta$.
The improvement factor grows with increasing $m$, but this is essentially a trade-off between releasing or withholding the sum. The rest of the parameters are relatively stable, with higher values of $r$ indicating that the threshold can be set higher than the estimated experimental $\lambda p$-quantile. For smaller $m$, we do not see any improvement in utility over the basic BT algorithm. Note that in such a case our mechanism simply releases the sum using the BT algorithm with noise scaled to $B$. Thus, we do not incur any extra cost in utility.

\subsubsection*{Effect of Global Parameters}
The parameters $\epsilon$, $\delta$, $\beta$, and $m$ are global parameters specified to the optimization algorithm. The parameters $\beta$ and $\delta$ are fixed as before.  Thus, we look at the evolution of IF with different values of $\epsilon$ and $m$. For each value, we run the optimizer to output a set of local parameters that maximize utility. For this, we only use the train trips dataset with $n = \text{25,000,000}$.


\begin{itemize}
\item [$\epsilon$:] Smooth sensitivity is roughly proportional to $\frac{1}{\epsilon_1^2}$, so smaller values of $\epsilon$ (and consequently $\epsilon_1$) will not result in a high IF. On the other hand, if $\epsilon$ is too large, the error caused by truncation (step 5 of Mechanism~\ref{algo:global-mech}) will overwhelm the noise due to the Laplace mechanism, and hence the IF will be low. Figure~\ref{fig:eps-if} (left) shows this trend, where we plot the improvement factor of our algorithm over the BT algorithm by fixing $m = \text{50,000}$. 

\item [$m$:] The impact of the time lag $m$ is data dependent. We do not see much improvement when $m$ is small, say around 10,000. With $m$ around 50,000 we see noticeable increase in IF. This is indicated by Figure~\ref{fig:eps-if} (right), where we have fixed $\epsilon = 1$.
\end{itemize}


\begin{figure}[!ht]
\centering
\includegraphics[trim={0 0 0 1.5mm},clip]{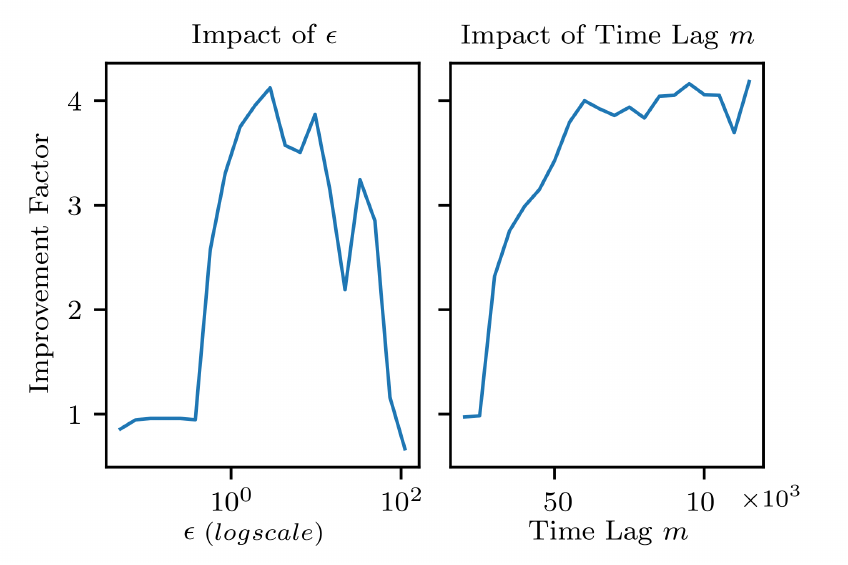}
\caption{Influence of global parameters on the improvement factor.}
\label{fig:eps-if}
\end{figure}

\subsection{Private Choice of Parameters}
The parameters required as input in Mechanism~\ref{algo:global-mech} are $\epsilon_1$, $\epsilon_2$, $\delta$, $m$ and $r$. For reasons of privacy, the choice of these parameters cannot be based on optimization on a particular input dataset. We therefore discuss some heuristic choices for these parameters based on our experiments above. The parameters $\epsilon = \epsilon_1 + \epsilon_2$ and $\delta$ can be chosen in the standard way. For instance, $\epsilon = 1$ and $\delta = n^{-2}$. From Tables~\ref{tab:train} and \ref{tab:sm}, a value in the range $0.8$ to $0.9$ is a reasonable choice for $\epsilon_1$. Note that $\epsilon_2$ is readily determined by $\epsilon$ and $\epsilon_1$. From the same tables, we see that an $r$ between 1 and 2 suffices. We therefore discuss heuristics for choosing $m$.

\subsubsection*{Heuristics for Selecting the Time Lag $m$:}
We specify two criteria that should be satisfied by the time lag. By assigning reasonably conservative (with respect to utility guarantees) values to the free parameters in the two criteria, we obtain a value of $m$ that is expected to provide good utility in practice. The overall value of $m$ can be obtained as the maximum of the two values returned by the criteria. 

\paragraph{First Criterion} This is the probability of having $\hat{x}_{\lambda p} < x_p$ given by the function $g$ in Eq.~\ref{eq:func:g}. We set $g(\lambda, p, m) = g(0.5, p_{\max}, m) < \beta$. All our optimization experiments returned a value of $\lambda$ close to 1. Thus, setting $\lambda = 0.5$ is a reasonably conservative choice. 

\paragraph{Second Criterion} This is related to the (approximate) scale of smooth sensitivity:\footnote{If the time lag is small, the observations will be sparsely distributed implying a large scale of smooth sensitivity.}
\begin{equation}
\frac{\kappa \mathsf{SS}_{\sigma, b}(\hat{x}_{\lambda p})}{a} \cdot G_{\text{ns}}^{-1}(1-\beta) \approx \frac{B}{10} \label{eq:approx-ss}
\end{equation}
The term $B/10$ is arbitrarily chosen to ensure that the scale is a few orders of magnitude less than $B$. With the Laplace noise distribution, we can take $ a = \frac{\epsilon}{\sqrt{-\ln{\delta}}} $ and $b = \min(1,\frac{\epsilon}{-2\ln\delta})$. 
This readily gives us $\kappa$ through Eq.~\ref{eq:kappa}. Now, to get a conservative bound on $\mathsf{SS}_{\sigma, b}(\hat{x}_{\lambda p})$, 
we first assume that the exponential distribution has its $p_{\max}$-quantile close to $B$. In other words, $x_{p_{\max}} \approx B$. This means 
that the upper bound on the smooth sensitivity is conservative. Now at $B$, the probability density function of the exponential 
distribution is $\frac{-p_{\max}\ln p_{\max}}{B}$. With $m$ observations, the inverse of the average distance between two observations is therefore roughly 
\begin{equation}
d = \frac{-m p_{\max}\ln p_{\max}}{B}. \label{eq:density}
\end{equation}
Now, assuming that the density is approximately constant in the neighbourhood of $B$, smooth sensitivity is given by
\[ 
\max\limits_{k \in \mathbb{N}}\{d k \exp (-bk) \}
\]
We can bound the above if we replace natural numbers with real numbers, resulting in 
\begin{equation}
\mathsf{SS}_{\sigma, b}(\hat{x}_{\lambda p}) < \frac{d \exp(-1)}{b}. \label{eq:real-ss}
\end{equation}
Now, combining Eqs.~\ref{eq:approx-ss},~\ref{eq:density} and \ref{eq:real-ss} and solving for $m$, we get our second criterion on $m$ as
\[
m > \frac{20 \kappa (-\ln \delta)^{1.5} \exp(-1) G_{\text{ns}}^{-1}(1-\beta)}{-\epsilon^2 p_{\max} \ln p_{\max}} 
\]

With $\beta = 0.02$, $\delta = 2^{-20}$, $\epsilon = 1$ and $p_{\max} = 0.005$, we find that $m$ should be at least 20,000 in order to satisfy the first criterion. The second criterion imposes $m>80,000$ for good utility. From Tables~\ref{tab:train} and \ref{tab:sm} we see that with $m>80,000$ we indeed obtain good utility. Of course, the higher the time lag $m$, the better the utility gain, with the trade-off that there is a longer time lag before we output the sum.

\section{Experimental Evaluation}

\subsection{Accumulated Error on the Sum}
We now show the improvement factor in computing the moving average (sum) through our mechanism. Since the error is maximized at step $n$, i.e., the last observation of the stream, we compare the value of $\hat{c}(\sigma, n)$ through our mechanism against its counterpart via the BT algorithm. For both datasets, we run the two mechanisms a total of 20,000 times and display the empirical probability density function (PDF) of error. 

\subsubsection{Train Trips Dataset}
We fixed $\epsilon = 1$, $\delta = 2^{-20}$, $\beta = 0.02$, $m = 50,000$ and $n = 25,000,000$, and obtained the values of the local parameters after optimization, shown in Table~\ref{tab:train}. We set $B = 1440$ minutes, which is the maximum possible commute time in a 24 hour period. Figure~\ref{fig:pdf_error_train_trips} shows the PDF of the resulting error (normalized by the maximum value $B$) of our mechanism and the BT algorithm. We see that the error in our case is more tightly concentrated around $0$. On average, we obtain an improvement factor of 3.5. 

\begin{figure}[ht]
\centering
\includegraphics[trim={10mm 0 0 0.5mm},clip]{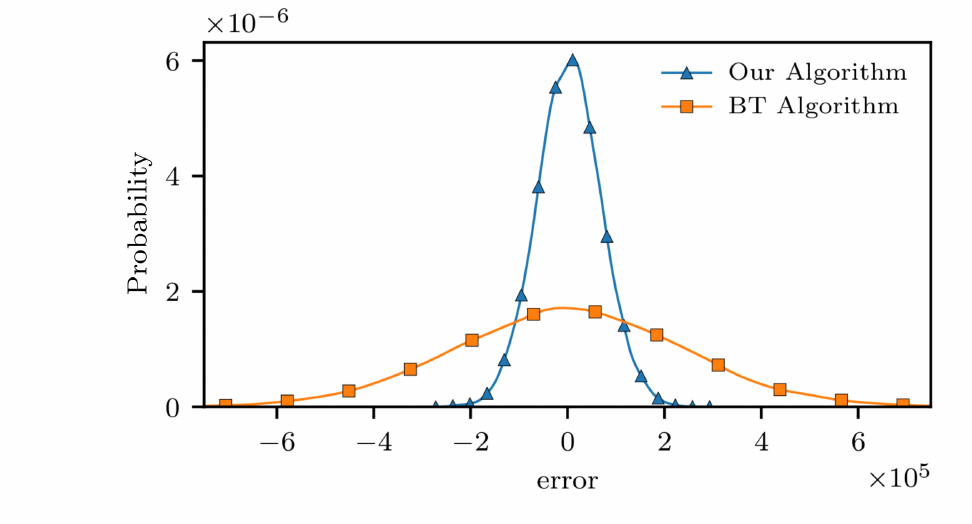}
\caption{PDF of the error on the train trips dataset through our mechanism and the BT algorithm.}
\label{fig:pdf_error_train_trips}
\end{figure}

We are also interested in knowing whether our estimation of the outlier error (i.e., the second summand in Eq.~\ref{eq:utility}), is close to the actual outlier error. This will validate whether our assumption that the distribution of the dataset is light-tailed. To verify this we re-ran our mechanism 20,000 times on the same dataset and obtained the ratio $\frac{\text{real error}}{\text{estimated error}}$ with the same parameters as above. 
Figure~\ref{fig:pdf_threshold_train_trips} shows that we have erred on the precautionary side with our estimation of outlier error being well within the actual error. 


\begin{figure}[ht]
\centering
\includegraphics[trim={0 0 0 4mm},clip]{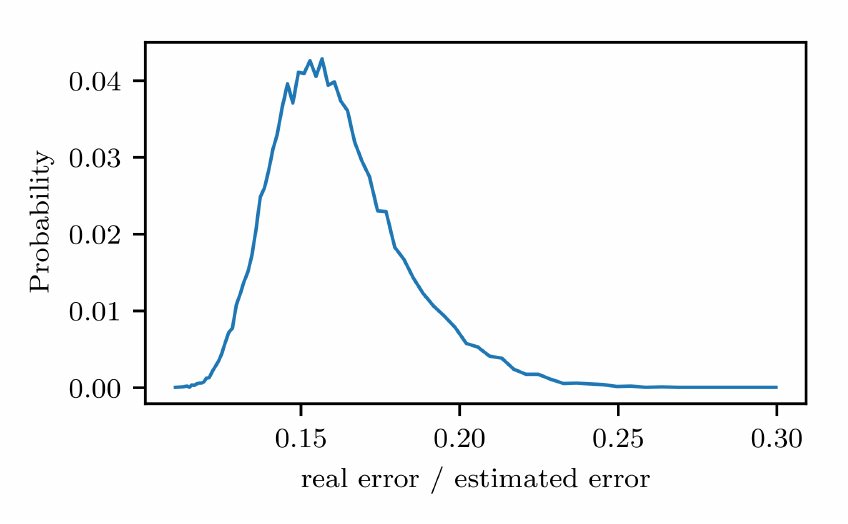}
\caption{Distribution of the outlier error ratio $\text{real error}/\text{estimated error}$ on the train trips dataset. Our estimated outlier error is well below the real outlier error.}
\label{fig:pdf_threshold_train_trips}
\end{figure}

\subsubsection{Supermarket Dataset}
For the supermarket dataset, we use the same set of parameters except that we have $n = 150,000$ (due to less data points) and $B = 3,000$ dollars (a conservative guess on the amount spent). 
Figure~\ref{fig:pdf_error_sm} shows the PDF of the error from our mechanism and the BT algorithm. Once again the error through our mechanism is more tightly concentrated around $0$. For this dataset, we perform much better than the BT algorithm, with an improvement factor of 9 on average.


\begin{figure}[ht]
\centering
\includegraphics[trim={10mm 0 5mm 0.5mm},clip]{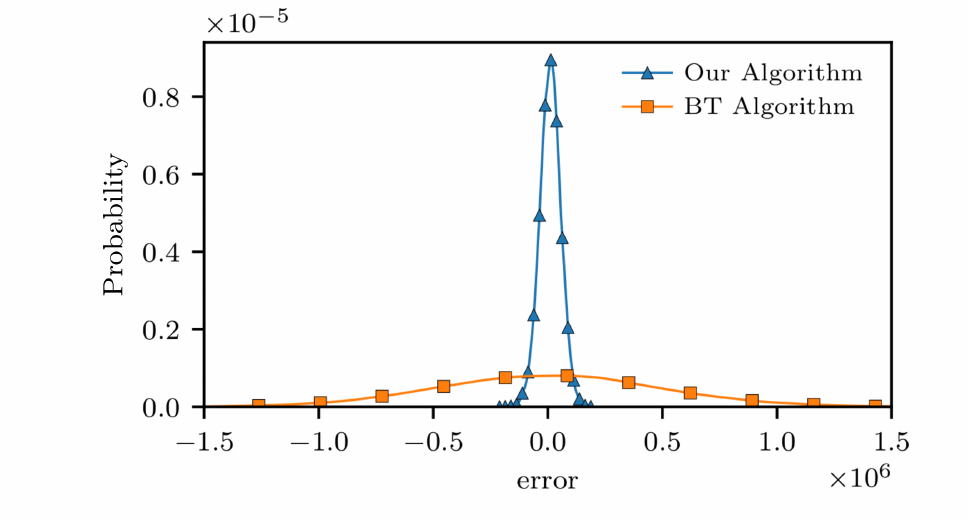}
\caption{PDF of the error on the supermarket dataset through our mechanism and the BT algorithm.}
\label{fig:pdf_error_sm}
\end{figure}

For this dataset as well we are interested in knowing whether the estimated error due to outliers is well below the actual error. Figure~\ref{fig:pdf_threshold_sm} shows that the dataset does indeed have a light-tailed distribution with the actual error being almost always below the estimated value.


\begin{figure}[ht]
\centering
\includegraphics[trim={0 0 0 4mm},clip]{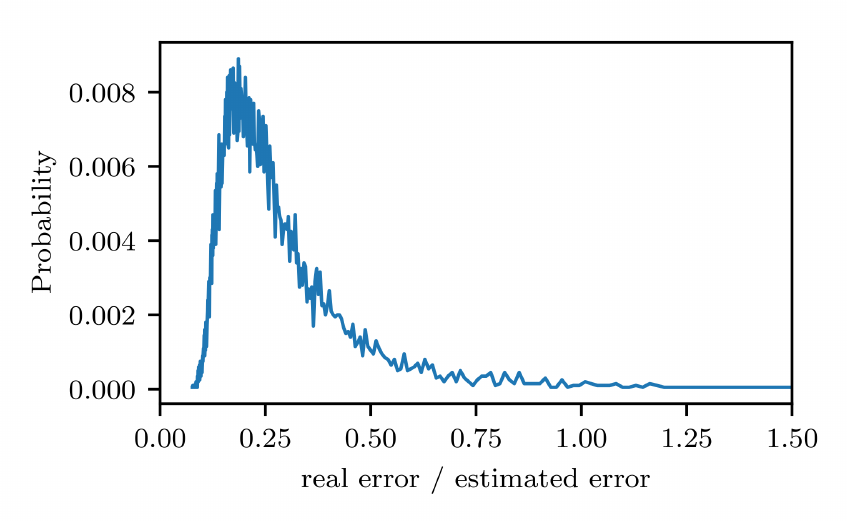}
\caption{Distribution of the outlier error ratio $\text{real error}/\text{estimated error}$ on the supermarket dataset. Once again our estimated outlier error is close to the real outlier error.}
\label{fig:pdf_threshold_sm}
\end{figure}

\subsection{Does the Distribution Remain Light-Tailed across Time?}
Recall that our mechanism promises improved utility based on the premise that data distribution is light-tailed. 
Since the input stream is time dependent, the estimated threshold (using the $\lambda p$-quantile) through $m$ observations with a given time period may be drastically different from its estimate via a different time period. 
%
To ensure that this is not the case, we analyzed the distribution of the train trips dataset across different hours and different days of the week. 
The distributions are shown in Figures~\ref{fig:dist-hours} and~\ref{fig:dist-days}, respectively. While the beginning of the distributions show variation based on the time period, the tails are similar and light-tailed. Thus, our estimated threshold is likely to improve utility independent of the time period in real datasets. 


\begin{figure}[ht]
\centering
\includegraphics[trim={0 0 0 4mm},clip]{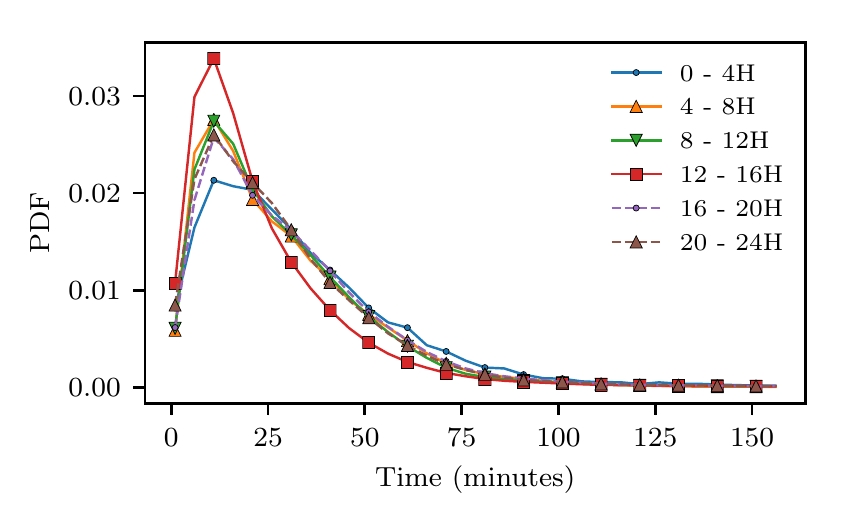}
\caption{Train commute time distribution from different hours of the day. Again all are light-tailed.}
\label{fig:dist-hours}
\end{figure}

\begin{figure}[ht]
\centering
\includegraphics[trim={0 0 0 4mm},clip]{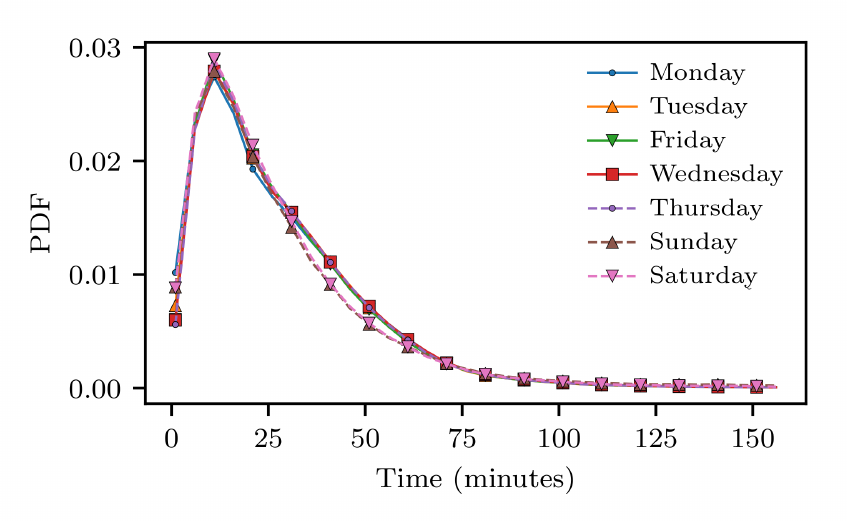}
\caption{Train commute time distribution from different days of the week. All are light-tailed.}
\label{fig:dist-days}
\end{figure}

\section{Related work}
As previously noted, the privacy-preserving algorithms for continual release of statistics from binary streams proposed in~\cite{cont-observe, cont-release} can be generalized to the scenario addressed in this paper, i.e., release of statistics from a stream whose values are from the real interval $[0, B]$. Indeed, we have used the algorithm from~\cite{cont-release} as one of the components of our method. However, the focus of the two works in~\cite{cont-observe} and \cite{cont-release} is on improving the error for binary strings which do not have the added factor of $B$. The two  algorithms are based on event-level privacy. As such, if the aim of privacy is to protect all events from an individual (e.g., all trips made by an individual over the course of the whole year), then the privacy provided by these algorithms is insufficient. The work from~\cite{w-event} attempts to improve this by offering privacy for up to $w$ successive events. Noting that any $w$ successive events might not contain multiple events originating from a single individual, the authors from~\cite{l-trajectory} introduce $l$-trajectory privacy, where any successive $l$ events from a user are targeted for privacy. These works essentially propose privacy mechanisms for variants of the definition of differential privacy where neighbouring streams are defined differently from the standard definition of Hamming distance. We note that our method can be easily used in conjunction with these algorithms, as we only use the BT algorithm from~\cite{cont-release} in a modular way. However, to find a utility maximizing threshold in a differentially private manner for any variation in the definition of neighbouring streams requires tweaking our mechanism. Likewise, these algorithms also target infinite streams as opposed to bounded streams (as is done in our paper). Application of our approach to these settings is an interesting area for future work. 

As argued before, privacy-preserving continual release of the sum is only one example of functions that can be released with improved utility through our mechanism. As long as the target function remains a function of the stream, has reduced sensitivity based on tighter concentration of input data, and the error due to outliers can be bounded and related to the $p$-quantile, we can adapt all the steps of our method to the given function. This allows us to estimate the threshold from the data, thus finding the optimum balance between the error due to symmetric noise and the error due to outliers. Examples of such functions include the sliding window average or the decaying sum where either past observations are completely discarded or are given progressively less weights~\cite{decay-ave}. Another example is continually releasing histogram of the input stream where we would like to completely discard bins above the main concentration of the data.


Our work on estimating the threshold using the $p$-quantile can be thought of as an exercise in finding ``robust'' statistics~\cite{rob-stats-book} with differential privacy, a line of work that was discussed in~\cite{robust-stats} and \cite{dwork2010differential}. These works estimate the scale of the input data with the help of the interquartile range using the Propose-Test-Release approach~\cite{robust-stats}. Briefly, this approach checks whether a given analysis uses a function that is robust or stable~\cite{salil-tut} on a given dataset or not. If the answer is no, the analysis is abandoned. In other words, the interquartile range may not even be released if it is not stable for the given dataset. Our approach is different as we use the $p$-quantile as the estimate of the scale of the input dataset, and use smooth sensitivity to release it. Unlike the Propose-Test-Release approach which may not release the $p$-quantile depending on input data, we have the advantage that we always obtain an estimate. This allows us to optimize utility by bounding errors introduced by the estimation of the scale of the dataset. We note that the problem of finding differentially private quantiles is also tackled in~\cite{smith2011privacy}, but the main ingredient there is the exponential mechanism~\cite{exp-mech}, and the context is static datasets rather than continual release of data. 

The main idea of our work is to reduce the sensitivity of the query (in our case, the moving sum), by relying on some initial knowledge of the input data. If we succeed in reducing sensitivity, we significantly reduce the scale of noise added to the query, thus improving accuracy of the query answer. A similar approach has also been used in some other works for other query types or applications. In~\cite{fan2013differentially}, the authors use prior knowledge of the dataset to release time-series data with better accuracy. The aforementioned approach is also used in~\cite{xu-hist} where the aim is to display a differentially private histogram by altering the size of the bins in order to artificially reduce sensitivity.


\section{Conclusion}
We have presented a privacy-preserving mechanism to continually display the moving average of a stream of observations where the bound on each observation is either too conservative or not known a priori. We have relied on justified assumptions on real-world datasets to obtain a better bound on observations of the stream. Moreover, we have shown how to obtain this bound in a differentially private manner while optimizing utility. Our mechanism can be applied to many real-world applications where continuous monitoring and reporting of statistics is required, e.g., smart meter data and commute times. Our techniques can be improved in several ways. We have relied on the quantile to estimate the bound on the streaming data based on smooth sensitivity. There may be other ways to display the quantile using other robust statistics. Our mechanism can be adapted to compute functions other than the moving average. Likewise, our method can be used in conjunction with algorithms that provide privacy for multiple events instead of single events as is done in this paper. Overall, we see our work as an instance of applying differential privacy in practice.

%
%
%

\bibliographystyle{IEEEtran}
\bibliography{references}

\end{document}